\newcommand{\caB}{{\mathcal B}}
\newcommand{\caH}{{\mathcal H}}
\newcommand{\caN}{{\mathcal N}}
\newcommand{\caO}{{\mathcal O}}
\newcommand{\caP}{{\mathcal P}}
\newcommand{\caT}{{\mathcal T}}
\newcommand{\caV}{{\mathcal V}}
\newcommand{\bbC}{\mathbb{C}}
\newcommand{\bbN}{\mathbb{N}}
\newcommand{\beq}{\begin{equation}}
\newcommand{\eeq}{\end{equation} }
\newcommand{\e}{\mathrm{e}}
\newcommand{\iu}{\mathrm{i}}
\renewcommand{\d}{\mathrm{d}}
\newcommand{\norm}{ ||}
\newcommand{\bbZ}{\mathbb{Z}}
\theoremstyle{plain}
\newtheorem{theorem}{Theorem}
\newtheorem{lemma}{Lemma}
\newcommand{\ad}{\mathrm{ad}}
\newcommand{\adjoint}{\mathrm{ad}}
\newcommand{\diam}{\mathrm{diam}}
\newcommand{\bfN}{\mathbf{N}}
\newcommand{\bfJ}{\mathbf{J}}
\newcommand{\bfm}{\mathbf{m}}
\begin{document}

\title{Very slow heating for weakly driven quantum many-body systems}

\author{Wojciech De Roeck}
\affiliation{KU Leuven,
3001 Leuven, Belgium}
\email{wojciech.deroeck@kuleuven.be}

\author{Victor Verreet}
\affiliation{KU Leuven,
3001 Leuven, Belgium}
\email{victor.verreet@kuleuven.be}


\date{\today}

\vspace{0.5cm}
\begin{abstract}
It is well understood that many-body systems driven at high frequency heat up only exponentially slowly and exhibit a long prethermalization regime. We prove rigorously that a certain relevant class of systems heat very slowly under weak periodic driving at intermediate frequency as well.  This class of systems are those whose time-dependent, possibly translation-invariant, Hamiltonian is a weak perturbation of a sum of mutually commuting,  terms.   This condition covers several periodically kicked systems that have been considered in the literature recently, in particular kicked Ising models. In contrast to the high-frequency regime, the prethermalization dynamics of our systems is in general not related to any time-independent effective Floquet Hamiltonian. 
Our results also have non-trivial implications for closed (time-independent) systems. We use the example of an Ising model with transversal and longitudinal field to show how they imply confinement of excitations. More generally, they show how ``glassy"  kinetically constrained models emerge naturally from simple many-body Hamiltonians, thus connecting to the topic of  'translation-invariant localization'. 
\end{abstract}

\maketitle

\noindent \textbf{Introduction} The phenomenon of a long-lived quasi-stationary state, also known as a \emph{prethermal state},  has become an important paradigm in the theory of non-equilibrium many-body systems.  Such quasi-stationary states often exhibit interesting features that cannot be realized in equilibrium states \cite{martin2017topological,lindner2017universal,yao2017discrete,else2017prethermal}.  One crucial ingredient to have a long-lived prethermal regime is of course that the system resists thermalization for a long time.   
In this letter, we are interested in cases where the thermalization rate is beyond perturbation theory. In particular, we do not consider the oft-discussed case of prethermalization in weakly perturbed integrable systems, where the prethermalization regime lasts in general for a time $\propto 1/g^2$ with $g$ the perturbation strength \cite{berges2004prethermalization,bertini2015prethermalization,eckstein2009thermalization,mallayya2018prethermalization,reimann2019typicality}, but instead we look for thermalization times superpolynomial in $1/g$.

Several instances of such prethermal phases have been identified. 
The most obvious class consists of systems periodically driven at high frequency, where a Magnus-Floquet expansion yields an effective Hamiltonian \cite{abanin2017rigorous,lazarides_das_14,dalessio_rigol_14,mori2016rigorous,abanin2017effective,bukov2015universal}. 
It is natural to inquire whether a similar scenario might also be realized away from the high-frequency regime and in this letter we aim to provide a positive and general answer to this question.

\noindent \textbf{Weakly driven systems}
Given a Hamiltonian $H=H^0+g W(t)$ describing spins on a spatial lattice of arbitrary dimension, with periodic but otherwise generic driving $W(t)=W(t+T)$. When is the heating rate non-perturbatively small in $g$, i.e.\ taking $g$ small compared to other local energy scales, regardless of the initial state?  This question remains meaningful if we allow ${H^{0}}$ to depend periodically on $t$ as well, as long as its one-cylce propagator $U_0=\caT\e^{-\iu \int^T_0 dt {H^{0}}(t)}$ ($\caT\ldots$ denotes time-ordering of the epxonential), or one of its powers $U_0^p, p \in \bbN$,    has a meaningful (i.e.\ local) conservation law, i.e.\ there is a $Q$  such that $[Q,U_0^p]=0$ and $Q$ is a sum of local terms. Such a conservation law has observable consequences and the question of its persistence at $g\neq 0$ is well-posed. 
In this setting, we argue that heating is non-pertubatively slow whenever we can represent the $p$-cycle propagator as 
$U_0^p =\caT\e^{-\iu \int^{pT}_0 dt {\widetilde H^{0}}(t)}$ with ${\widetilde H^{0}}(t)$ a sum of local terms ${\widetilde H^{0}}_i(t)$ that mutually commute at all times $[{\widetilde H^{0}}_i(t),{\widetilde H^{0}}_j(t')]=0$, and a certain \emph{Diophantine condition} is satisfied.   

\noindent \textbf{Static systems}
Now we take a time-independent local many-body Hamiltonian $H=H^0+g W$ with $W$ generic and $g$ again small compared to local energy scales. Instead of 'heating', the appropriate question is now to what extent conservation laws of $H^0$ (in particular $H^0$ itself) are broken by $gW$.  We claim that a sufficient conditon for superpolynomial persistence of these conservation laws  is again that $H^0$ is a sum of mutually commuting terms. The larger the set of Bohr frequencies (energy differences)  defined by these local commuting terms, the smaller we need to take $g$ to see the effect, but the more numerous the number of quasi-conserved charges $Q$. This multitude of conserved quantities can lead to local frustration and emergent kinetic constraints, as we will demonstrate in an example.

\noindent \textbf{Ergodicity-breaking phases} The most well-known examples of the above claims are the cases where $H^0,\widetilde H^0$, respectively, are sums of commuting disordered terms (known as 'LIOMs' \cite{imbrie2016many,serbyn2013local,huse2014phenomenology,ros2015integrals}
). Then, those claims are weakened versions of the stability of \emph{many-body localization} (MBL) w.r.t.\ small perturbations, be it in static systems \cite{basko2006metal,gornyi2005interacting,nandkishore2014many}, periodically driven \cite{ponte2015many,lazarides2015fate,abanin2016theory} or time-crystals (case $p>1$)\cite{wilczek2012quantum,khemani2016phase,khemani2017defining}. 
Our claims are weaker because they only state that the thermalization time is very large instead of infinite and also because they posit the existance of a few ($\caO(1)$ regardless of volume)  quasi-conserved charges $Q$.
In this letter,  we concentrate on translation invariant $H^0,\widetilde H^0$ and we exhibit rigorously $\caO(1)$ number of charges. 
Depending on the initial state, it may actually happen that there emerge additional $\caO(L)$ quasi-conserved charges ($L$= number of degrees of freedom) via frustration and effective kinetic constraints. This is related to \emph{fractons} \cite{prem2017glassy,pai2019localization} and, more generally, to so-called translation invariant quasi-localization, see e.g.\ \cite{kagan1984localization,de2014asymptotic,schiulaz2015dynamics,
yao2014quasi,hickey2016signatures}.

\noindent \textbf{Intuitive picture}
Let us start from the static setting $H=H^0+gW$.
As announced, we assume that $H^0=\sum_i H^0_i$ with $[H^0_i,H^0_j]=0$, therefore every operator of the form $\sum_i f(H^0_i)$ commutes with $H^0$ as well. If now all $H^0_i$ have a finite spatial range $R$ and norm bound $\gamma$,  and the number of degrees of freedom per site (local Hilbert space dimension) is finite as well, then we can represent
$$
H^0=\bfJ\cdot\bfN=\sum_{\alpha=1}^q J_{\alpha}N^{({\alpha})}
$$
where $N^{({\alpha})}=\sum_i N^{({\alpha})}_i$ with $i$ labelling lattice sites and such that
\textbf{a})  $[N^{({\alpha})}_i,N^{({\alpha}')}_j]=0$, \textbf{b}) $N^{({\alpha})}_i$ has integer spectrum, \textbf{c}) $N^{({\alpha})}_i$ acts only within a distance $R$ of site $i$ and is uniformly bounded $\norm N_i^\alpha\norm\leq \gamma$.\\ 
If the conservation laws $N^{(\alpha)}$ are to be broken by local terms of strength $g$, we have to find \emph{local transitions} $ \mathbf{n}\to\mathbf{n}+ \Delta \mathbf{n}$ with $\Delta \mathbf{n} \in \bbZ^q$ that are on-resonance up to an energy offset of order $g$, i.e.\ 
\begin{equation}\label{eq: nonres condition}
|\mathbf{J}\cdot \Delta \mathbf{n}| \leq \caO(g) 
\end{equation}
Since $gW$ acts locally and the $N_i^\alpha$ are bounded, the components of $ \Delta \mathbf{n}$ have to be smaller than some fixed value. This means that, for generic $\bm{J}$ and small enough $g$, this constraint allows no nonzero solution $\mathbf{n}\neq 0$ and hence we are led to believe that all $N^{(\alpha)}$ are conserved. This is of course merely first-order reasoning but it turns out (see later) that higher orders do not change the picture, upon taking non-dissipative effects into account by a unitary frame rotation.  
If we add periodic time-dependence (frequency $2\pi/T$) to the problem, then similar reasoning applies with the role of $H^0$ now played by $\frac{1}{T}\int_0^{pT}dt H^0(t)$. To allow for the possibility of absorbing/emitting $\Delta n_0$ quanta from the drive, we modify the condition \eqref{eq: nonres condition} to 
$$
|\mathbf{J}\cdot \Delta \mathbf{n} + (2\pi/T)\Delta n_0 | \leq g 
$$
The same considerations as above apply, except that now we need a constraint on the $q+1$-tuple $(2\pi/T,\bm{J}) $.
Such constraints are well-known from KAM-phenomena \cite{poschel2001lecture} and Nekoroshev estimates, where, just as in our case, they express that resonances are absent. 
Since the frequency $2\pi/T$ enters the above formulas on the same footing as the couplings $J^\alpha$, the same formalism applies to quasi-periodically driven systems as well (long prethermalization observed in \cite{dumitrescu2018logarithmically}), see \cite{else2019long}. 

We stress that the above considerations involve \emph{spatial locality}, as opposed to locality in momentum space.    Indeed, consider the relevant case of free fermions:
${H^{0}}=\sum_k \omega(k) n_k$ with $n_k=c^\dagger_k c_k$ the occupation operator of momentum mode $k$, and $\omega(k)$ the dispersion relation. Hence ${H^{0}}$ is a sum of very simple and mutually commuting operators, but they are \textbf{not} local and they do not satisfy our requirements, 
Indeed, for such $H^0$ one expects kinetic theory based on non-linear Boltzmann equations to describe the heating process, with rate $\propto g^2$, see \cite{spohn2007kinetic,lukkarinen2011weakly}.

\noindent \textbf{Result for static systems} 
Our systems live on a large but finite graph $\Lambda$, with a finite Hilbert space $\bbC^d$ attached to each site $i\in \Lambda$.  The total  Hilbert space $\caH$ is hence $(\bbC^d)^{\otimes_\Lambda}$ and we say that an operator $O=O_S$ is supported in a set $S$ if it is of the form $ O_S \otimes \mathbbm{1}_{S^c}$. We consider a Hamiltonian of the form
$$
H=H^0+gW,
$$
with 
\begin{equation}\label{def: hzero}
H^0= \bfJ\cdot\bfN= \sum_{\alpha=1}^q J_\alpha N^{(\alpha)}
\end{equation}
where the  operators $N^{(\alpha)}$ satisfy the conditions \textbf{a,b,c}) in the previous section, and, as a concrete generalization of condition \eqref{eq: nonres condition} we assume the following \emph{Diophantine} condition expressing that the $\bfJ$ are sufficiently incommensurable at all orders in perturbation theory: there exist positive numbers $\tau, x>0$, such that
\begin{equation}\label{eq: diophantine static}
|\bfm\cdot{\bfJ}| \geq \frac{x |\bfJ|}{|\bfm|^{\tau}},\qquad \forall \bfm \in \bbZ^q
\end{equation}
Finally, we need the Hamiltonian $gW$ to have local terms of strength  $\caO(g)$, decaying fast in the size of their support. To that end, we define (see SM) the \emph{local norms} $\norm \cdot \norm_\kappa $ with spatial decay parameter $\kappa>0$   and we choose a $\kappa_0>0$ such that $\norm gW\norm_{\kappa_0} \leq g$. 

\begin{theorem}
There is a unitary base change  $\hat Y$  such that, in the rotated frame, the Hamiltonian   takes the form
$$  \hat{Y}H\hat{Y}^\dagger= {H^{0}}+\hat{D}+\hat{V} $$
where, with constants $c,C$ depending only on the  parameters $\kappa_0,\gamma, q,p$ and in particular not on the volume $|\Lambda|$, 
\begin{enumerate}
\item  The driving $\hat V$ is very weak: 
$$\norm \hat{V}\norm_{\kappa}  \leq  g \exp{(-c[1/\epsilon]^{1/p})},\qquad  \epsilon \equiv \frac{g}{|\bfJ|}(1+\tfrac1x)   $$
for any $p>q+\tau$ and $\kappa=c\kappa_0|\log\epsilon|^{-1}$. 
\item  $\norm \hat{D}\norm_{\kappa}  \leq Cg $ and $[N^{(\alpha)},\hat{D}]=0$ for any $\alpha =1,\ldots,q$.  
\end{enumerate}
\end{theorem}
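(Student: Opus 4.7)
The plan is a KAM/Nekhoroshev-style iteration in successive rotating frames that eliminates non-conserving terms order by order while monitoring the loss of spatial locality. Three competing scales must be balanced: the Diophantine small denominators, the exponential decay of Fourier tails in the local norm, and the quadratic convergence of the scheme.

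The essential preliminary is a Fourier decomposition with respect to the commuting charges. Since the $N^{(\alpha)}$ mutually commute and have integer spectra, any operator decomposes as $V = \sum_{\bfm \in \bbZ^q} V_{\bfm}$ with $[N^{(\alpha)}, V_{\bfm}] = m^{(\alpha)} V_{\bfm}$ and therefore $[H^0, V_{\bfm}] = (\bfJ\cdot\bfm)\, V_{\bfm}$. Locality and boundedness of the $N^{(\alpha)}_i$ (conditions \textbf{a,c}) imply that each $V_{\bfm}$ has support only slightly larger than that of $V$ and that, on a support of diameter $\ell$, only modes with $|\bfm| \lesssim \gamma \ell$ contribute; hence the exponential spatial decay encoded in $\lVert\cdot\rVert_\kappa$ translates into exponential decay of $\lVert V_{\bfm}\rVert$ in $|\bfm|$. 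The inductive step starts from $H = H^0 + D + V$ with $[D, N^{(\alpha)}] = 0$, fixes a mode cutoff $M$, splits $V = V_0 + V_{\le M} + V_{>M}$, and solves the cohomological equation $[H^0, A] = -V_{\le M}$ by
\[
A \;=\; \sum_{0 < |\bfm| \le M} \frac{V_{\bfm}}{\bfJ\cdot\bfm}.
\]
The Diophantine assumption~(\ref{eq: diophantine static}) controls the small denominators, giving $\lVert A\rVert_{\kappa'} \lesssim (M^\tau/(x|\bfJ|))\,\lVert V\rVert_\kappa$ at a slightly degraded $\kappa' < \kappa$. Conjugation by $\e^{A}$ yields $H^0 + D' + V'$, where $D' = D + V_0$ plus the $\bfm = 0$ pieces of the commutator expansion still commutes with every $N^{(\alpha)}$, and $V'$ consists of the Fourier tail $V_{>M}$ together with nested commutators of $A$ with $V$ and $D$. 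In local norm at a further degraded parameter this is bounded by $\e^{-c\kappa M}\lVert V\rVert_\kappa$ plus $(M^\tau/(x|\bfJ|))\,\lVert V\rVert_\kappa (\lVert V\rVert_\kappa + \lVert D\rVert_\kappa)$.

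Iterate with a sequence of cutoffs $M_n$ growing only polynomially in $n$ and locality parameters $\kappa_n \searrow \kappa_\infty$ whose decrements sum within the budget $\kappa_0$. Once $\epsilon = g(1+1/x)/|\bfJ|$ is small, the Fourier tail beats the Diophantine blow-up and $V_n$ converges quadratically. Stopping at the optimal iteration count $n_* \sim [1/\epsilon]^{1/p}$ for any $p > q + \tau$ balances the loss $M_n^\tau$ against the gain $\e^{-c\kappa_n M_n}$ and produces the stretched-exponential estimate $\lVert \hat V\rVert_\kappa \le g\exp(-c[1/\epsilon]^{1/p})$ at the claimed $\kappa \sim \kappa_0/|\log\epsilon|$; the diagonal contributions $V_0^{(n)}$ form a geometric series and sum to $\hat D$ with $\lVert \hat D\rVert_\kappa \le Cg$. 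The main obstacle is the locality bookkeeping: each commutator and each Fourier projection enlarges supports, so one needs Cauchy-type estimates $\lVert[A, V]\rVert_{\kappa - \delta} \le C\delta^{-s}\lVert A\rVert_\kappa \lVert V\rVert_\kappa$ with a uniform exponent $s$, together with a schedule $\{\kappa_n\}$ that leaves a positive $\kappa_\infty$ at the stopping time. The characteristic exponent $1/p$ arises precisely because the Diophantine factor $M^\tau$ forbids geometric growth of $M_n$, leaving only the polynomial growth that is the hallmark of Nekhoroshev-type estimates.
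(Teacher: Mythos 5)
Your proposal is correct in substance and follows the same overall strategy as the paper: an iterated Schrieffer--Wolff rotation removing the part of the perturbation that fails to commute with the charges, a mode decomposition $V=\sum_{\bfm}V_{\bfm}$ generated by the commuting integer-spectrum $N^{(\alpha)}$, Diophantine control of the denominators $\bfJ\cdot\bfm$, locality bookkeeping through the $\kappa$-norms with a slowly shrinking schedule, and termination at $n_*\sim\epsilon^{-1/p}$ yielding the stretched exponential. The one genuinely different technical choice is your explicit Fourier cutoff $M_n$: you solve the homological equation only for modes $0<|\bfm|\le M_n$ and carry the tail $V_{>M_n}$, exponentially small in $M_n$, as part of the error, balancing $M_n^{\tau}$ against $\e^{-c\kappa_n M_n}$. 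The paper instead solves $[A_n,H^0]=-V_n$ exactly, term by term: a term strongly supported in $S$ only contains modes $|\bfm|\lesssim\gamma|S|$, so the small-denominator factor is polynomial in $|S|$, and Lemma \ref{lem: support multiplier} trades that polynomial for a loss of $\kappa$ --- no cutoff or tail estimate is needed. Both routes give the same per-step contraction $\sim\epsilon\,n^{\tau+q}$ (up to logarithms) and the same admissible exponent $p>q+\tau$; your version requires the (true and easy) observation that $\kappa$-decay of local terms implies exponential decay of $\|V_{\bfm}\|$ in $|\bfm|$, while the paper's version leans on the ``strong support'' formalism so that $\caP_{\bfm}$ and $(\ad_{H^0})^{-1}$ do not enlarge supports.

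One caveat: the scheme is not quadratically convergent, and your own displayed bound shows why --- the dominant new term is $[A_n,D_n]$ with $\|D_n\|_\kappa$ of order $g$ throughout the iteration, so each step gains only a factor $\sim\epsilon M_n^{\tau}$, i.e.\ the error is linear in $\|V_n\|_\kappa$, not quadratic. This is exactly why the optimal stopping time is $n_*\sim[1/\epsilon]^{1/p}$ and the final bound is a stretched exponential rather than the super-exponential accuracy a genuine Newton scheme would produce. Since your stopping rule and final estimate are the ones dictated by the linear picture, nothing breaks, but the phrase ``converges quadratically'' should be dropped.
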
 
We see hence that $\hat D$ inherits the conservation laws of $H^0$ and $\hat V$ should be considered as a remaining weak driving, enforcing thermalization at quasi-exponential times.  
Importantly, the unitary conjugation $\hat Y \cdot \hat Y^\dagger$ is close to identity: for local operators $O$ with support size $\caO(1)$, we have $\norm \hat Y O \hat Y^\dagger -O \norm \leq C\epsilon\norm O\norm$ (see SM for a more precise statement and construction). 
One obvious consequence is that, in the original frame, the dressed operators $\hat N^{(\alpha)}\equiv \hat Y N^{(\alpha)} \hat Y^\dagger$, which are sums of quasi-ocal terms, are quasi-conserved:
$$
\tfrac{1}{|\Lambda|}\norm \e^{\iu t H}\hat N^{(\alpha)}\e^{-\iu t H} -\hat N^{(\alpha)}\norm \leq  g \exp{(-c[1/\epsilon]^{1/p})} 
$$
up to stretched-exponential long time $\exp{(c[1/\epsilon]^{1/p})} $.   Hence in particular thermalization is obstructed until that time.\\ 
\noindent \textbf{Comments}
If we imagine $\bfJ$ to be chosen uniformly on the unit hypersphere $S^{q}$, then the probability of the Diophantine condition \eqref{eq: diophantine static} being violated, decays as $Cx$ when $x\to0$, provided that $\tau>q-1$. In particular, \eqref{eq: diophantine static} is satisfied with probability $1$ for some $x$. 
It follows that the power $p$ in the stretched exponential can be chosen $p=2q-1+\delta$, with $\delta$ arbitrarily small.
In particular, for the case $q=1$, this yields almost an exponential in $1/\epsilon$, as was already proven in \cite{abanin2017rigorous,else2017prethermal}.

If the condition \eqref{eq: diophantine static} (for a given $x$) fails for some $\bfm$ with $|\bfm| \sim R\gamma n$, then this simply means that our perturbative reduction of the driving $\hat V$ can only be carried out to a power $n$ instead of a power diverging as $\epsilon\to 0$, giving $\norm V\norm_{\kappa} \leq g \epsilon^{n}$. Therefore, it is really only low-order resonances that can hamper the slow thermalization.  
However, such low-order resonances can almost always be lifted by redefining $H^0$, as we also illustrate in the example.\\
\noindent \textbf{Example: static Ising}
Let
$$
H= \sum_i J\sigma^x_i \sigma^x_{i+1}+ h_x \sigma^x_i+h_z\sigma^z_i. $$
and we consider $h_z$ as the weak-driving parameter $g$, i.e.\ we set  $H^0 \equiv JN^{(1)}+h_x N^{(2)} $ with $N^{(1)}\sum_i \sigma^x_i \sigma^x_{i+1}$  the number of domain walls (up to a constant) and  $N^{(2)}=\sum_i \sigma^x_i$ the magnetization. 

If the pair  $(J,h_x)$ is sufficiently incommensurable, i.e.\ condition \eqref{eq: diophantine static} holds, then our theory yields that, in the rotated frame, the Hamiltonian is approximately $\hat H= JN^{(1)}+h_x N^{(2)}+\hat D $ ( we neglected the very weak driving $\hat V$), with both $N^{(1,2)}$ commuting with $\hat D$ and hence conserved locally.
For example, let us consider a configuration of the form
$$\ldots \downarrow\downarrow\downarrow\downarrow  \underbrace{\uparrow\uparrow \cdot\uparrow\uparrow}_{\text{length}\,\, \ell_0}\downarrow\downarrow\downarrow\downarrow  \ldots $$
The only transition that preserves both magnetization and the number of domains consists of a shift of the entire block $\uparrow\uparrow \cdot\uparrow\uparrow$. However, since the operator $\hat D$ is a sum of local terms, the largest term that can cause such a shift, is of size $g\epsilon^{c\ell_0}$ with $c$ a constant of order $1$. 
In our formalism, this follows by the local bound $\norm \hat D\norm_{\kappa} \leq Cg$, expressing exponential decay of local terms in $\hat D$. Hence, the domains acquire a very large mass and and are nearly static. 
An even more drastic example is
$$\ldots \downarrow\downarrow\downarrow\downarrow\downarrow\downarrow  \uparrow\uparrow \uparrow\uparrow \uparrow\uparrow \ldots $$
with the configuration extending infinitely far in both directions. Here, \emph{no single transition} is allowed by the conservation laws and there should be therefore no dissipative dynamics up to (quasi-)exponential time.   
These phenomena were numerically studied and explained in \cite{mazza2019suppression}, our theorem adds a controlled proof and a novel point of view.  
If $(J,h_x)$ are commensurable, say $J=2h_x$, then of course our theorem does not apply in the above way. However, in that case, one can choose $H^0=h_x N^{(1')}$ with $N^{(1')}\equiv 2N^{(1)}+N^{(2)} $, giving again a meaningful constraint.
Let us finally return to the case of incommensurable  $(J,h_x)$. More generally, if we consider a system with a small density of domain walls, the above considerations show that the local dynamics is highly constrained and appears many-body-localized for a long time. It can hence be a considered as an emergent kinetically constrained model \cite{van2015dynamics}.  However, as was argued in \cite{de2014scenario}, the long time scale that emerges here is in general not beyond perturbation theory in the parameter $g$.

\noindent \textbf{Result for periodic driving}
We are inspired by the setup proposed in the section "weakly driven systems", with $U_0^p$ generated by $\widetilde H^0(t)$. However, for the sake of notational simplicity, we write now $H^0$ instead of $\widetilde H^0$ and we redefine $T \mapsto pT$. This means that we potentially describe the original system of interest only at stroboscopic times $t \in pT\bbN$, but this suffices for the sake of obstructions to thermalization. \\
Concretely, we consider a $T$-periodic Hamiltonian of the form 
$$
H(t)=H^0(t)+gW(t), \qquad  H(t)=H(t+T)
$$
with 
\begin{equation}\label{def: hzero}
H^0(t)= \bfJ(t)\cdot\bfN= \sum_{\alpha=1}^q J_\alpha(t)N^{(\alpha)}
\end{equation}
Here, the  operators $N^{(\alpha)}$ are exactly as in the static setup with the sole difference that we allow $T$-periodic coupling $\bfJ(t)=\bfJ(t+T)$.  Furthermore, we assume that the time-dependent (local terms of) $W(t)$ and the couplings $\bfJ(t)$ are piecewise-continuous. 
Just as in the static case, we assume the local bound  $\norm g W\norm_{\kappa_0} \leq g$, except that now this bound includes a supremum over $t$, see SM. Finally, the \emph{Diophantine} condition is slighly modified: there are $0<\tau,x<\infty$ such that, with
$\bar{\bm{J}}=\tfrac1T\int_0^T dt \bm{J}(t)$
\begin{equation}
\label{eq: diophantine periodic}
\inf_{n\in\bbZ}|T\bfm\cdot\bar{\bfJ}- 2\pi n| \geq \frac{x}{|\bfm|^{\tau}},\qquad \forall \bfm \in \bbZ^q
\end{equation}
We let $U(t)$ be the solution of the Schrodinger equation
$$
\iu \partial_t U(t)=H(t)U(t),\qquad  U(0)=1
$$
\begin{theorem}\label{thm: periodic}
There is a $T$-periodic unitary base-change  $\hat Y(t)$ such that  $
\hat U(t)\equiv \hat{Y}(t)U(t)$ solves the Schrodinger equation $$\iu \partial_t \hat{U}(t)= ({H^{0}}(t)+\hat{D}(t)+\hat{V}(t)) \hat{U}(t),$$
and  (with $c,C$ depending only on $\gamma,\kappa_0, q$ and $p$)
\begin{enumerate}
\item  The driving $\hat V(t)$ is very weak
 $$ \norm \hat{V}(t)\norm_{\kappa}  \leq g \exp{(-c[1/\epsilon]^{1/p})},\qquad \epsilon \equiv gT(1+\frac1x)$$
 with $p>q+\tau+1$ and  $\kappa=c\kappa_0|\log\epsilon|^{-1}$
\item  $ \norm \hat{D}(t)\norm_{\kappa}  \leq Cg $ and 
 $[N^{(\alpha)},\hat{D}]=0$ for any $\alpha =1,\ldots,q$. 
\end{enumerate}
\end{theorem}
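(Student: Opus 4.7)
The plan is to adapt the Nekhoroshev/KAM-type scheme used in the single-frequency case (Abanin et al.\ 2017) to this multi-frequency setting. One constructs $\hat Y(t)$ as an infinite product of near-identity unitaries $\hat Y(t)=\cdots\e^{\iu A_2(t)}\e^{\iu A_1(t)}$, with each factor $T$-periodic in $t$, peeling off the off-resonant modes of the successive residual driving. The resonant ($\bfm=0$) pieces are collected into $\hat D(t)$, and the Diophantine condition \eqref{eq: diophantine periodic} controls the small denominators encountered when solving for each $A_n$.

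The key observation is that, since each $N_i^{(\alpha)}$ has integer spectrum and finite range (conditions \textbf{a,b,c}), every local operator admits a mode decomposition $O=\sum_{\bfm\in\bbZ^q}O_\bfm$ with $[N^{(\alpha)},O_\bfm]=m_\alpha O_\bfm$. At step $n$, having reached a Hamiltonian $H^0(t)+\hat D_n(t)+V_n(t)$, one truncates $V_n$ to $|\bfm|\leq M_n$, absorbs its $\bfm=0$ piece into $\hat D_{n+1}$, and chooses an anti-Hermitian, $T$-periodic $A_n(t)$ solving the cohomological equation
\[
\iu\partial_t A_n+[H^0(t),A_n]=V_n^{\mathrm{trunc},\,\bfm\neq 0}.
\]
Writing $\theta_\alpha(t)=\int_0^t J_\alpha(s)\,\d s=t\bar J_\alpha+\tilde\theta_\alpha(t)$ with $\tilde\theta_\alpha$ $T$-periodic, the substitution $A_{n,\bfm}=\e^{\iu\bfm\cdot\tilde\theta(t)}B_{n,\bfm}$ reduces this mode-by-mode to a constant-coefficient ODE that is inverted by dividing the $k$-th time-Fourier harmonic by $\bfm\cdot\bar{\bfJ}-2\pi k/T$, a denominator bounded below by $x/(T|\bfm|^\tau)$ precisely by \eqref{eq: diophantine periodic}. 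The $T$-periodicity of each $A_n(t)$, and hence of $\hat Y(t)$, is then automatic, and $[N^{(\alpha)},\hat D]=0$ because only $\bfm=0$ contributions are ever added to $\hat D$.

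Quantitatively, working in the local norms $\norm\cdot\norm_\kappa$ of the SM, conjugation by $\e^{\iu A_n}$ produces a new residual obeying, schematically, $\norm V_{n+1}\norm_{\kappa_{n+1}}\leq C(M_n^{\tau+q+1}/x)\norm V_n\norm_{\kappa_n}^2/|\bfJ|$, plus an exponentially small tail in $M_n$ from the truncation. The exponent $q+1$ (vs.\ $q$ in the static case) reflects the extra time-Fourier sum and is what forces the shift $p>q+\tau+1$ here. Choosing $\kappa_n\searrow\kappa_\infty=c\kappa_0/|\log\epsilon|$ geometrically and $M_n$ growing polynomially in $n$, the iteration runs for $\sim[1/\epsilon]^{1/p}$ steps before the polynomial prefactors overwhelm the quadratic gain; halting there yields the stretched-exponential bound on $\hat V$, while the collected $\hat D(t)=\sum_n(\hat D_{n+1}-\hat D_n)(t)$ obeys $\norm\hat D\norm_\kappa\leq Cg$ by summing a geometric series.

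The main obstacle is the quantitative closure of this recursion in the quasi-local norms, without any implicit volume factor. Commutators with $N^{(\alpha)}=\sum_i N_i^{(\alpha)}$, a spatially extended operator, are made harmless by the finite range $R$ and uniform bound $\gamma$: they can be evaluated locally because $N_i^{(\alpha)}$ only reaches distance $R$ from site $i$. One must then show that the loss of decay rate $\kappa_n\to\kappa_{n+1}$ at each step is just enough to absorb both the Diophantine inflation $M_n^\tau$ and the Fourier/commutator inflation $M_n^{q+1}$, while keeping everything $|\Lambda|$-independent. All remaining ingredients — the interaction-picture philosophy, mode decomposition, cohomological equation and Diophantine small-denominator bound — are by now standard; the nontrivial estimates lie in this volume-free bookkeeping, which is the main technical content of the argument.
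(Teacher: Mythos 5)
Your overall strategy --- iterative quasi-local rotations whose generators $A_n$ remove the nonresonant part of the residual perturbation, collection of the $\bfm=0$ part into $\hat D$, Diophantine control of the small denominators, and bookkeeping in $\kappa$-weighted potential norms with a slowly shrinking $\kappa$ --- is the same as the paper's, and your handling of $T$-periodicity and of $[N^{(\alpha)},\hat D]=0$ is fine. Two implementation choices differ but are legitimate in principle: you truncate the modes at $|\bfm|\le M_n$, whereas the paper needs no truncation because a term strongly supported in $S$ automatically carries only modes $|\bfm|\lesssim\gamma|S|$, the resulting $|S|^{\tau+q}$ factor being traded for a small loss of $\kappa$ (Lemma \ref{lem: support multiplier}); and you invert the cohomological equation harmonic-by-harmonic in a time-Fourier series, whereas the paper writes the periodic solution in closed form, Eq.\ \eqref{eq: a in terms of v}, and inverts $(1-\caV_{0,T})$ directly, which requires no smoothness of the drive. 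For the kicked, merely piecewise-continuous protocols that are the paper's main example, your Fourier route would additionally require tracking the decay of the time-Fourier coefficients of $V_n(t)$ through the whole iteration, a complication the paper deliberately avoids.

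The genuine flaw is your schematic recursion $\|V_{n+1}\|_{\kappa_{n+1}}\le C(M_n^{\tau+q+1}/x)\,\|V_n\|_{\kappa_n}^2/|\bfJ|$ and the stopping rationale built on it. The conjugation by $\e^{A_n}$ acts on $H^0+\hat D_n+V_n$, and $\hat D_n$, which remains of size $g$ for all $n$ and does not commute with $A_n$, produces the dominant new term $\approx[A_n,\hat D_n]$, which is \emph{linear} in $V_n$ with prefactor $\sim gT\cdot\mathrm{poly}(n)$ (also, in the periodic case the relevant scale is $T$, not $1/|\bfJ|$, consistent with $\epsilon=gT(1+\tfrac1x)$). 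So the gain per step is a factor $\epsilon\,\mathrm{poly}(n)$, not a factor $\|V_n\|$; moreover, a genuinely quadratic recursion would never be ``overwhelmed by polynomial prefactors'' at $n\sim\epsilon^{-1/p}$ --- it would give doubly exponential decay in $n$ --- so your halting argument is internally inconsistent. The correct linear recursion, as in the paper, reads $w(n+1)\le C\log^3(n+1)\,n^{\tau+q+1}\,\epsilon\,w(n)$ with $d(n)\le Cg$ propagated alongside (this is where $\|\hat D\|_\kappa\le Cg$ comes from); it still closes and yields exactly your step count $n_*\sim\epsilon^{-1/p}$ with $p>q+\tau+1$ and the stretched-exponential bound on $\hat V$. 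The error is therefore repairable, but as written the quantitative core of your argument is wrong.
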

Just as in the static case, this theorem shows that the dressed operators
$
\hat N^{(\alpha)}(t)= \hat{Y}(t)N^{(\alpha)} \hat{Y}^*(t)
$
are quasi-conserved
$
\tfrac{1}{|\Lambda|}\left(\norm \hat N^{(\alpha)}(t)- U(t)\hat N^{(\alpha)}(0)U^\dagger(t)\norm\right) \leq g \exp{(-c[1/\epsilon]^{1/p})} 
$ for $t \leq (1/g) \exp{(c[1/\epsilon]^{1/p})}$ 
and they obstruct heating.\\
Our scheme does not yield any \emph{effective Hamiltonian} $H_F$, satisfying $U\approx \e^{-\iu T H_F}$, with $\approx$ indicating validity up to long times.  Instead, we obtain 
$U= \caT \e^{-\iu \int_0^T dt (H^0(t)+ \hat{D}(t))} $. This would yield a $H_F=\tfrac{1}{T} \int_0^T dt (H^0(t)+ \hat{D}(t))$ \emph{provided that the local terms of $H^0(t)+ \hat{D}(t)$ commute between different times $t$}, but this is not true in general\footnote{The way to realize this is to convince oneself that the local terms in $\hat D(t)$ can be completely arbitrary in general, up to the commutation with $N^{(\alpha)}$} and so existence of some $H_F$ seems unlikely.

\begin{figure}
\begin{center}
\includegraphics[width=0.475\textwidth]{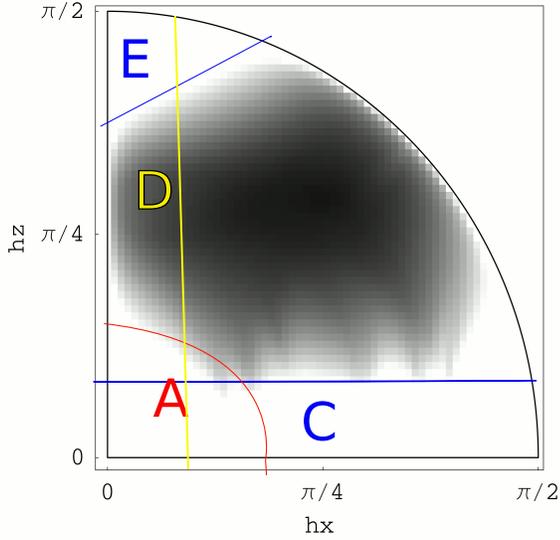}
\caption{(Taken From  \cite{prosen2007chaos}, Figure 13).  Kicked Ising chain at $J=1$. White regions are slowly thermalising: an appropriately defined thermalization rate is found to be smaller than $10^{-6}$ per cycle. The colored lines and letters demarcate our perturbative regimes}
\label{fig: kickedising}
\end{center}
\end{figure}

\noindent \textbf{Example: kicked Ising chain} We consider the one-cycle unitary given by  
$$U \equiv e^{-\iu \sum_i J\sigma^x_i \sigma^x_{i+1}} e^{-\iu \sum_i \mathbf{h}\cdot\bm{\sigma}_i }.
$$
where $\bm{\sigma}_i=(\sigma^{x}_i,\sigma^{y}_i,\sigma^{z}_i)$ are the Pauli-matrices acting on site $i=1,\ldots,L$.
This model was studied numerically in detail in \cite{prosen2007chaos}. Figure \ref{fig: kickedising} (copied from Figure 13 in \cite{prosen2007chaos}) shows the regimes where one observes very slow thermalization numerically.  We view this model as originating from a time-dependent $H(t)=H_11_{[0,1]}(t)+H_21_{[1,2]}(t)$ with period $T=2$. 
Our Theorem \ref{thm: periodic} applies to this model in several cases (not mutually exclusive), distinghuished by which Hamiltonian plays the role of $H^0(t)$.
\\
 
\noindent {\textbf{Case A ($|\textbf{h}|\ll 1$).}} Here, we let ${H^{0}}(t)=J(t)\sum_i\sigma^x_i \sigma^x_{i+1}$ with $J(t)=J1_{ [0,1]}(t)$.  Theorem \ref{thm: periodic} posits that heating is non-perturbatively slow in $|\mathbf{h}| \ll 1 $.\\
\noindent \textbf{Case B ($|J|\ll 1$).} Here, we take  ${H^{0}}(t)=\sum_i\mathbf{h}(t)\cdot\bm{\sigma}_i$ with $\mathbf{h}(t)= \mathbf{h}1_{ [1,2]}$.   Theorem \ref{thm: periodic} says that heating is no-perturbatively slow in $|J|\ll 1$. \\
 \noindent \textbf{Case C  ($|h_z|\ll 1$).}
Here 
$ {H^{0}}(t)=   \sum_i (J1_{[0,1]}(t)\sigma^x_i \sigma^x_{i+1} +  h_x1_{[1,2]}(t)\sigma_i^x)$ and we take  $|h_z| \ll 1$, again Theorem \ref{thm: periodic} shows slow heating in that regime. \\
\noindent \textbf{Case D ($|h_x|\ll 1$).}
Here $U_0$ can be mapped to a free fermion expression with non-trivial dispersion relation by the Jordan-Wigner transformation. However, as we argued in section 'intuitive picture', this does not meet our criteria.  The figure  shows indeed that there is \textbf{no} very slow heating for $|h_x|\ll 1$ (provided other couplings are not in a perturbative regime).  \\
\noindent \textbf{Case E ($|\bm{h}-(0,0,\tfrac{\pi}{2})|\ll 1$).}
Here we need to take the second power of the propagator. Indeed, now $U_0= e^{-\iu \sum_i J\sigma^x_i \sigma^x_{i+1}}e^{-\iu \tfrac\pi2 \sum_i \sigma^z_i}= e^{-\iu \sum_i J\sigma^x_i \sigma^x_{i+1}} F $ with the flip $F=\prod_i (-\iu\sigma_i^z)$. Therefore, $U_0^2= \e^{-2\iu J \sum_i \sigma^x_i \sigma^x_{i+1} }$, i.e.\ the setup applies simply setting $H^0(t)=2J \sum_i \sigma^x_i \sigma^x_{i+1}  $.

The claims in cases \textbf{A,B} have also been confirmed recently in \cite{vajna2018replica} by numerics and a replica expansion for an effective Hamiltonian $H_{F}$ satisfying $U=\e^{-\iu T H_{F}}$. As explained above, the existence of $H_F$ is however not suggested by our treatment. More precisely,\cite{vajna2018replica} predict slow thermalization with the stretched exponential with $p=2$.
In our theorem, taking $q=1$ and $\tau>1$ so that the Diophantine condition \eqref{eq: diophantine periodic} is satisfied almost surely for some $x$  (see SM), we prove these claims with $p=3+\delta$, for arbitrarily small $\delta$.

\noindent \textbf{Conclusion}
 We have identified a class of conditions under which non-perturbative rigorous lower bounds on the thermalization time can be proven, both in static and periodically driven systems. In the driven case, our theorems allow to understand the phase diagram of the kicked Ising model. In the static case, we provide a rigorous underpinning of the observed very slow dynamics of domain walls in Ising models.

\noindent \textbf{Note} The manuscript \cite{else2019long} appeared while we were finishing our paper. It is based on the same ideas as the present letter, has very analogous results, and goes beyond our analysis in many respects.  Since it however focuses on the case of quasiperiodic driving, its analogue of our Theorem \ref{thm: periodic} has technical restrictions on the smoothness of the driving protocol, in particular ruling out our main example ``kicked Ising models".
 
\begin{acknowledgments} \textbf{Acknowledgements}
 W.D.R. acknowledges the support of the Flemish Research Fund FWO undergrants G098919N and G076216N, and the support of KULeuven University under internal grantC14/16/062. 
\end{acknowledgments}

\bibliographystyle{plain}
\bibliography{loclibrary}

\begin{thebibliography}{10}

\bibitem{abanin2017rigorous}
Dmitry Abanin, Wojciech De~Roeck, Wen~Wei Ho, and Fran{\c{c}}ois Huveneers.
\newblock A rigorous theory of many-body prethermalization for periodically
  driven and closed quantum systems.
\newblock {\em Communications in Mathematical Physics}, 354(3):809--827, 2017.

\bibitem{abanin2017effective}
Dmitry~A. Abanin, Wojciech De~Roeck, Wen~Wei Ho, and F.~Huveneers.
\newblock Effective hamiltonians, prethermalization, and slow energy absorption
  in periodically driven many-body systems.
\newblock {\em Phys. Rev. B}, 95:014112, Jan 2017.

\bibitem{abanin2016theory}
Dmitry~A Abanin, Wojciech De~Roeck, and Fran{\c{c}}ois Huveneers.
\newblock Theory of many-body localization in periodically driven systems.
\newblock {\em Annals of Physics}, 372:1--11, 2016.

\bibitem{basko2006metal}
DM~Basko, IL~Aleiner, and BL~Altshuler.
\newblock Metal--insulator transition in a weakly interacting many-electron
  system with localized single-particle states.
\newblock {\em Annals of physics}, 321(5):1126--1205, 2006.

\bibitem{benettin1988nekhoroshev}
Giancarlo Benettin, J{\"u}rg Fr{\"o}hlich, and Antonio Giorgilli.
\newblock A nekhoroshev-type theorem for hamiltonian systems with infinitely
  many degrees of freedom.
\newblock {\em Communications in mathematical physics}, 119(1):95--108, 1988.

\bibitem{berges2004prethermalization}
J{\"u}rgen Berges, Sz~Bors{\'a}nyi, and Christof Wetterich.
\newblock Prethermalization.
\newblock {\em Physical review letters}, 93(14):142002, 2004.

\bibitem{bertini2015prethermalization}
Bruno Bertini, Fabian~HL Essler, Stefan Groha, and Neil~J Robinson.
\newblock Prethermalization and thermalization in models with weak
  integrability breaking.
\newblock {\em Physical review letters}, 115(18):180601, 2015.

\bibitem{bukov2015universal}
Marin Bukov, Luca D'Alessio, and Anatoli Polkovnikov.
\newblock Universal high-frequency behavior of periodically driven systems:
  from dynamical stabilization to floquet engineering.
\newblock {\em Adv. Phys.}, 64(2):139--226, 2015.

\bibitem{cuneo2017energy}
No{\'e} Cuneo, Jean-Pierre Eckmann, and C~Eugene Wayne.
\newblock Energy dissipation in hamiltonian chains of rotators.
\newblock {\em Nonlinearity}, 30(11):R81, 2017.

\bibitem{dalessio_rigol_14}
L.~D'Alessio and M.~Rigol.
\newblock Long-time behavior of isolated periodically driven interacting
  lattice systems.
\newblock {\em Phys. Rev. X}, 4:041048, Dec 2014.

\bibitem{datta1996low}
Nilanjana Datta, Roberto Fern{\'a}ndez, and J{\"u}rg Fr{\"o}hlich.
\newblock Low-temperature phase diagrams of quantum lattice systems. i.
  stability for quantum perturbations of classical systems with finitely-many
  ground states.
\newblock {\em Journal of statistical physics}, 84(3-4):455--534, 1996.

\bibitem{de2014asymptotic}
Wojciech De~Roeck and Fran{\c{c}}ois Huveneers.
\newblock Asymptotic quantum many-body localization from thermal disorder.
\newblock {\em Communications in Mathematical Physics}, 332(3):1017--1082,
  2014.

\bibitem{de2014scenario}
Wojciech De~Roeck and Fran{\c{c}}ois Huveneers.
\newblock Scenario for delocalization in translation-invariant systems.
\newblock {\em Physical Review B}, 90(16):165137, 2014.

\bibitem{de2015asymptotic}
Wojciech De~Roeck and Fran{\c{c}}ois Huveneers.
\newblock Asymptotic localization of energy in nondisordered oscillator chains.
\newblock {\em Communications on Pure and Applied Mathematics},
  68(9):1532--1568, 2015.

\bibitem{dumitrescu2018logarithmically}
Philipp~T Dumitrescu, Romain Vasseur, and Andrew~C Potter.
\newblock Logarithmically slow relaxation in quasiperiodically driven random
  spin chains.
\newblock {\em Physical review letters}, 120(7):070602, 2018.

\bibitem{eckstein2009thermalization}
Martin Eckstein, Marcus Kollar, and Philipp Werner.
\newblock Thermalization after an interaction quench in the hubbard model.
\newblock {\em Physical review letters}, 103(5):056403, 2009.

\bibitem{else2017prethermal}
Dominic~V Else, Bela Bauer, and Chetan Nayak.
\newblock Prethermal phases of matter protected by time-translation symmetry.
\newblock {\em Physical Review X}, 7(1):011026, 2017.

\bibitem{else2019long}
Dominic~V Else, Wen~Wei Ho, and Philipp~T Dumitrescu.
\newblock Long-lived interacting phases of matter protected by multiple
  time-translation symmetries in quasiperiodically-driven ssystems.
\newblock {\em arXiv preprint arXiv:1910.03584}, 2019.

\bibitem{giorgilli2015extensive}
Antonio Giorgilli, Simone Paleari, and Tiziano Penati.
\newblock An extensive adiabatic invariant for the klein--gordon model in the
  thermodynamic limit.
\newblock In {\em Annales Henri Poincar{\'e}}, volume~16, pages 897--959.
  Springer, 2015.

\bibitem{gornyi2005interacting}
IV~Gornyi, AD~Mirlin, and DG~Polyakov.
\newblock Interacting electrons in disordered wires: Anderson localization and
  low-t transport.
\newblock {\em Physical review letters}, 95(20):206603, 2005.

\bibitem{hickey2016signatures}
James~M Hickey, Sam Genway, and Juan~P Garrahan.
\newblock Signatures of many-body localisation in a system without disorder and
  the relation to a glass transition.
\newblock {\em Journal of Statistical Mechanics: Theory and Experiment},
  2016(5):054047, 2016.

\bibitem{huse2014phenomenology}
David~A Huse, Rahul Nandkishore, and Vadim Oganesyan.
\newblock Phenomenology of fully many-body-localized systems.
\newblock {\em Physical Review B}, 90(17):174202, 2014.

\bibitem{huveneers2013drastic}
Fran{\c{c}}ois Huveneers.
\newblock Drastic fall-off of the thermal conductivity for disordered lattices
  in the limit of weak anharmonic interactions.
\newblock {\em Nonlinearity}, 26(3):837, 2013.

\bibitem{imbrie2016many}
John~Z Imbrie.
\newblock On many-body localization for quantum spin chains.
\newblock {\em Journal of Statistical Physics}, 163(5):998--1048, 2016.

\bibitem{kagan1984localization}
Yu~Kagan and LA~Maksimov.
\newblock Localization in a system of interacting particles diffusing in a
  regular crystal.
\newblock {\em Zhurnal Eksperimental’noi i Teoreticheskoi Fiziki},
  87:348--365, 1984.

\bibitem{khemani2016phase}
Vedika Khemani, Achilleas Lazarides, Roderich Moessner, and Shivaji~L Sondhi.
\newblock Phase structure of driven quantum systems.
\newblock {\em Physical review letters}, 116(25):250401, 2016.

\bibitem{khemani2017defining}
Vedika Khemani, CW~Von~Keyserlingk, and Shivaji~Lal Sondhi.
\newblock Defining time crystals via representation theory.
\newblock {\em Physical Review B}, 96(11):115127, 2017.

\bibitem{lazarides_das_14}
Achilleas Lazarides, Arnab Das, and Roderich Moessner.
\newblock Equilibrium states of generic quantum systems subject to periodic
  driving.
\newblock {\em Phys. Rev. E}, 90:012110, Jul 2014.

\bibitem{lazarides2015fate}
Achilleas Lazarides, Arnab Das, and Roderich Moessner.
\newblock Fate of many-body localization under periodic driving.
\newblock {\em Physical review letters}, 115(3):030402, 2015.

\bibitem{lindner2017universal}
Netanel~H Lindner, Erez Berg, and Mark~S Rudner.
\newblock Universal chiral quasisteady states in periodically driven many-body
  systems.
\newblock {\em Physical Review X}, 7(1):011018, 2017.

\bibitem{lukkarinen2011weakly}
Jani Lukkarinen and Herbert Spohn.
\newblock Weakly nonlinear schr{\"o}dinger equation with random initial data.
\newblock {\em Inventiones mathematicae}, 183(1):79--188, 2011.

\bibitem{mackay1994proof}
RS~MacKay and S~Aubry.
\newblock Proof of existence of breathers for time-reversible or hamiltonian
  networks of weakly coupled oscillators.
\newblock {\em Nonlinearity}, 7(6):1623, 1994.

\bibitem{mallayya2018prethermalization}
Krishnanand Mallayya, Marcos Rigol, and Wojciech De~Roeck.
\newblock Prethermalization and thermalization in generic isolated quantum
  systems.
\newblock {\em arXiv preprint arXiv:1810.12320}, 2018.

\bibitem{martin2017topological}
Ivar Martin, Gil Refael, and Bertrand Halperin.
\newblock Topological frequency conversion in strongly driven quantum systems.
\newblock {\em Physical Review X}, 7(4):041008, 2017.

\bibitem{mazza2019suppression}
Paolo~Pietro Mazza, Gabriele Perfetto, Alessio Lerose, Mario Collura, and
  Andrea Gambassi.
\newblock Suppression of transport in nondisordered quantum spin chains due to
  confined excitations.
\newblock {\em Physical Review B}, 99(18):180302, 2019.

\bibitem{mori2016rigorous}
Takashi Mori, Tomotaka Kuwahara, and Keiji Saito.
\newblock Rigorous bound on energy absorption and generic relaxation in
  periodically driven quantum systems.
\newblock {\em Phys. Rev. Lett.}, 116:120401, Mar 2016.

\bibitem{nandkishore2014many}
Rahul Nandkishore and David~A Huse.
\newblock Many body localization and thermalization in quantum statistical
  mechanics.
\newblock {\em Annual Review of Condensed Matter Physics}, 6:15--38, 2015.

\bibitem{pai2019localization}
Shriya Pai, Michael Pretko, and Rahul~M Nandkishore.
\newblock Localization in fractonic random circuits.
\newblock {\em Physical Review X}, 9(2):021003, 2019.

\bibitem{ponte2015many}
Pedro Ponte, Z~Papi{\'c}, Fran{\c{c}}ois Huveneers, and Dmitry~A Abanin.
\newblock Many-body localization in periodically driven systems.
\newblock {\em Physical review letters}, 114(14):140401, 2015.

\bibitem{poschel1993nekhoroshev}
J{\"u}rgen P{\"o}schel.
\newblock Nekhoroshev estimates for quasi-convex hamiltonian systems.
\newblock {\em Mathematische Zeitschrift}, 213(1):187--216, 1993.

\bibitem{poschel2001lecture}
Jurgen Poschel.
\newblock A lecture on the classical kam theorem.
\newblock In {\em Proc. Symp. Pure Math}, volume~69, pages 707--732, 2001.

\bibitem{prem2017glassy}
Abhinav Prem, Jeongwan Haah, and Rahul Nandkishore.
\newblock Glassy quantum dynamics in translation invariant fracton models.
\newblock {\em Physical Review B}, 95(15):155133, 2017.

\bibitem{prosen2007chaos}
Toma{\v{z}} Prosen.
\newblock Chaos and complexity of quantum motion.
\newblock {\em Journal of Physics A: Mathematical and Theoretical},
  40(28):7881, 2007.

\bibitem{reimann2019typicality}
Peter Reimann and Lennart Dabelow.
\newblock Typicality of prethermalization.
\newblock {\em Physical review letters}, 122(8):080603, 2019.

\bibitem{ros2015integrals}
V~Ros, M~M{\"u}ller, and A~Scardicchio.
\newblock Integrals of motion in the many-body localized phase.
\newblock {\em Nuclear Physics B}, 891:420--465, 2015.

\bibitem{schiulaz2015dynamics}
Mauro Schiulaz, Alessandro Silva, and Markus M{\"u}ller.
\newblock Dynamics in many-body localized quantum systems without disorder.
\newblock {\em Physical Review B}, 91(18):184202, 2015.

\bibitem{serbyn2013local}
Maksym Serbyn, Z~Papi{\'c}, and Dmitry~A Abanin.
\newblock Local conservation laws and the structure of the many-body localized
  states.
\newblock {\em Physical review letters}, 111(12):127201, 2013.

\bibitem{simon2014statistical}
Barry Simon.
\newblock {\em The statistical mechanics of lattice gases}, volume~1.
\newblock Princeton University Press, 2014.

\bibitem{spohn2007kinetic}
Herbert Spohn.
\newblock Kinetic equations for quantum many-particle systems.
\newblock {\em arXiv preprint arXiv:0706.0807}, 2007.

\bibitem{ueltschi2004cluster}
Daniel Ueltschi.
\newblock Cluster expansions and correlation functions.
\newblock {\em Moscow Math. J}, 4(2):511--522, 2004.

\bibitem{vajna2018replica}
Szabolcs Vajna, Katja Klobas, Toma{\v{z}} Prosen, and Anatoli Polkovnikov.
\newblock Replica resummation of the baker-campbell-hausdorff series.
\newblock {\em Physical review letters}, 120(20):200607, 2018.

\bibitem{van2015dynamics}
Merlijn van Horssen, Emanuele Levi, and Juan~P Garrahan.
\newblock Dynamics of many-body localization in a translation-invariant quantum
  glass model.
\newblock {\em Physical Review B}, 92(10):100305, 2015.

\bibitem{wilczek2012quantum}
Frank Wilczek.
\newblock Quantum time crystals.
\newblock {\em Physical review letters}, 109(16):160401, 2012.

\bibitem{yao2014quasi}
N.~Y. Yao, C.~R. Laumann, J.~I. Cirac, M.~D. Lukin, and J.~E. Moore.
\newblock Quasi-many-body localization in translation-invariant systems.
\newblock {\em Phys. Rev. Lett.}, 117:240601, 2016.

\bibitem{yao2017discrete}
Norman~Y Yao, Andrew~C Potter, I-D Potirniche, and Ashvin Vishwanath.
\newblock Discrete time crystals: rigidity, criticality, and realizations.
\newblock {\em Physical review letters}, 118(3):030401, 2017.

\end{thebibliography}
\newpage
\widetext
\begin{center}
\textbf{\Large Supplemental Material}
\end{center}
\setcounter{equation}{0}
\setcounter{figure}{0}
\setcounter{table}{0}
\setcounter{page}{1}
\makeatletter
\renewcommand{\theequation}{A\arabic{equation}}
\renewcommand{\thefigure}{A\arabic{figure}}

The aim of this SM is to provide a full proof of our results. For reasons of clarity, we repeat the full setup here. Some parameters are defined in a slightly different way, but we hope that the path to the statements in the main text is yet sufficiently direct. 
Our proofs are based on rigorous implementations of Schrieffer-Wolff transformations. These have a ppeared a lot in mathematical physics, and we have been in particular influenced by \cite{datta1996low,imbrie2016many}.  Cluster expansions are then used to resum the resulting expansions, and we use the neat formalism of \cite{ueltschi2004cluster}. From a more direct point of view, the proof is based on techniques used in \cite{abanin2017rigorous,else2017prethermal} that ultimately are descendants of KAM theory and Nekoroshev estimates \cite{poschel1993nekhoroshev}. The application of such ideas in systems with a few degrees of freedom is standard and we do not review it. There are also a few rigorous works where such ideas are used to describe many-body systems at spectral edges (i.e.\ low energy, low density) where the situation effectively reduces to few-body theory, see e.g.\ \cite{benettin1988nekhoroshev,mackay1994proof,cuneo2017energy}. 
Rigorous results where such techniques are applied to constrain the dynamics in a genuine many-body setup have come into focus in the last years, see e.g.\ \cite{huveneers2013drastic,de2014asymptotic,de2015asymptotic,else2017prethermal,giorgilli2015extensive,imbrie2016many}

\section{Technical preliminaries and setup}

\subsubsection{Spaces}

We consider a large but finite graph $\Lambda$, equipped with the graph distance. The vertices $i$ of this graph are our 'sites'.  There is a finite Hilbert space $\bbC^d$ attached to each site $i\in \Lambda$, and we take $d$ to be fixed. 
 The total  Hilbert space $\caH$ is hence $(\bbC^d)^{\otimes_\Lambda}$ and we say that an operator $O=O_S$ in $\caB\equiv\caB(\caH)$ (space of bounded operators on $\caH$) is supported in a set $S$ if it is of the form $ O_S \otimes \mathbbm{1}_{S^c}$, with a slight abuse of notation.   This is the setting for quantum spin systems. One can also consider lattice fermions, if one makes some modifications in the definition, see .
For operators $O \in \caB$, we use the standard operator norm $\norm O\norm=\sup_{\psi \in\caH,\psi \neq 0} \tfrac{\norm O\psi\norm}{\norm\psi\norm}  $  where, on the right hand side, $\norm\cdot\norm$ is the Hilbert space norm on $\caH$. 
  For an operator $A$, we will freely use the notation $\adjoint_A$ to denote the superoperator acting on $\caB$ as 
$$
\adjoint_A(B)=[A,B]
$$

 \subsubsection{The 'number' operators $N^{(\alpha)}, \alpha=1,\ldots, q$}

 These operators play a central role in our analysis.  They are given as sums of local terms $N=\sum_{S \subset \Lambda}N^\alpha_{S}$ satisfying the following conditions
\begin{enumerate}
\item All local terms mutually commute: $[N^\alpha_S, N^{\alpha'}_{S'}]$. 
\item All of the $N^\alpha_S$ have integer spectrum.
\item  There is a fixed range $R$ such that $N^\alpha_S=0$ whenever $\diam(S)>R$.
\item  There is a local bound $\sup_{i\in \Lambda}\sum_{S \ni i}\norm N^\alpha_S\norm  \leq \gamma$. 
\end{enumerate}
 As explained in the main text, item 1,2) actually follow from the assumption of having a single operator $H^0$ with commuting local terms.

With these definitions in hand, we need to refine the notion of support of operators, following \cite{else2017prethermal}.
 We say that $O \in \caB$ is 'strongly supported' in $S$ if
 $O$ is supported in $S$ and, for any $S' \not\subset S$ we have $[O,N^\alpha_{S'}]=0$.  
Here are the important consequences
\begin{enumerate}
\item For any function $f$, if $O$ is strongly supported in $S$, then 
$ f( \ad_N) O $ is strongly supported in $S$. 
\item If $A,B$ are strongly supported in $S_A,S_B$, then  $[A,B]$ is strongly supported in $S_A\cup S_B$. 
\end{enumerate}
We wirte $\caB_S \subset \caB$ for the algebra of operators strongly supported in $S$.

 \subsubsection{The projectors $\caP_{\bfm}$}
First, we introduce the superoperator $\caN^{(\alpha)}=\adjoint_{N^{(\alpha)}}$ acting on $\caB$, and, for any $\mathbf{m}\in \bbZ^q$
$$
\caP_\mathbf{m} (O)=  \prod_{\alpha=1}^{q}\frac{1}{2\pi}\int_0^{2\pi} \d s  \e^{\iu s (\caN^{(\alpha)}-m_\alpha)} (O).
$$
One checks that 
\begin{enumerate}
\item  $\caP_\mathbf{m}=\caP^2_\mathbf{m}$, i.e.\ $\caP_\mathbf{m}$ is a projector.
\item  $\caP_\mathbf{m}$ is a contraction in the operator norm: $||\caP_m(O)|| \leq || O||$. 
\item  If $O$ is strongly supported in $S$, then $\caP_\mathbf{m}(O)$ is strongly supported in $S$. 
\end{enumerate}

 \subsubsection{Norms on Hamiltonians}

To handle operators $G$ that are sums of local terms, like many-body Hamiltonians, we introduce \emph{local norms} $\norm\cdot \norm_\kappa$ that are defined pertaining to a representation of $G$ as a sum of local terms
$$
G=\sum_{S \in \caP_\Lambda} G_S,   \qquad  G_S \in \caB_S
$$
where $\caP_\Lambda$ stands for the set of connected (by adjacency) subsets of $\Lambda$.  Since a given operators $G$ can always be represented in different ways as a sum over local terms (e.g. $\sigma^z_i+\sigma^z_{i+1}$ can be viewed as one term with $S=\{i,i+1\}$ or as two terms with $S=\{i\}, \{i+1\}$), the above definition does not immediately yield a well-defined norm. One could try to take the supremum over all representations but that is cumbersome in practice.    Therefore, the standard solution to this problem in mathematical statistical mechanics is to define the functions $S\mapsto G_S $ as the central objects ("\emph{Potentials}" or "\emph{Interactions}", see e.g.\ \cite{simon2014statistical}) and to have norms on them, and we will do this here. However, to keep the notation light, we denote these objects by the same symbol $G$. 
A further complication is that we consider time-dependent Hamiltonians $G(t)$and hence potentials, but here one can simply take the supremum over $t$ and mostly drop $t$ from the notation.  The norm is then, for any decay rate $\kappa$, 
$$
\norm G\norm_\kappa \equiv \max_{i \in \Lambda}\sum_{S \in \caP_\Lambda, S\ni i}\e^{\kappa |S|}\sup_t\norm G_S(t) \norm
$$

\section{Statement of results}

\subsection{Time-dependent results}

We assume that our time-dependent Hamiltonian is of the form 
$$
H(t)=H^0(t)+gW(t),
$$
with 
\begin{equation}\label{def: hzero}
H^0(t)= \bfJ(t)\cdot\bfN= \sum_{\alpha=1}^q J_\alpha(t)N^{(\alpha)}
\end{equation}
where 
\begin{enumerate}
\item The operators $N^{(\alpha)}$ satisfy the conditions above, with parameters $R$ (range) and $\gamma$ (local bound).  
\item The local constituents $W_S(t)$ and the parameters $J(t)$ are periodic in $t$ and piecewise-continuous. 
\item  $\norm W\norm_{\kappa_0} \leq 1$. 
\item There are $\tau,x>0$ such that, with
$\bar{\bm{J}}\equiv \tfrac1T \int_0^T dt \bm{J}(t)$
\begin{equation} \label{eq: diophantine sm}
\inf_{n\in\bbZ}|T\bfm\cdot\bar{\bfJ}- 2\pi n| \geq \frac{x}{|\bfm|^{\tau}},\qquad \forall \bfm \in \bbZ^q
\end{equation}
\end{enumerate}

If  $\bfJ/|\bfJ|$ is sampled uniformly on the unit hypersphere, then such a constant $x$ can be found with probability $1$ provided that $q<\tau$, see the standard reasoning in the Appendix at the end 

The genuine dimensionless small parameter in our theory is then
$$
\epsilon \equiv gT(1+\tfrac1x)
$$

We let $U(t)$ be the solution of the Schrodinger equation
$$
\iu \partial_t U(t)=H(t)U(t),\qquad  U(0)=1
$$
\begin{theorem}
There is a $T$-periodic unitary  $\hat Y$ and $T$-periodic Hamiltonians $\hat{D}(t),\hat{V}(t)$ such that  $
\hat U(t)\equiv \hat{Y}(t)U(t)$ solves the Schrodinger equation $$\iu \partial_t \hat{U}= ({H^{0}}(t)+\hat{D}(t)+\hat{V}(t)) \hat{U}(t),$$
where $\hat{D}$ conserves $\bfN$ and $\hat{V}$ is very weak: 
\begin{enumerate}
\item 
 $[N^{(\alpha)},\hat{D}]=0 \quad \text{for any} \quad \alpha =1,\ldots,q. $ 
\item  $\norm\hat D \norm_\kappa \leq Cg$ and $
 \norm \hat{V}\norm_{\kappa}  \leq  g \e^{-C(1/\epsilon)^{1/p}}, $
 with $p$ any number larger than $q+\tau+1$, $\kappa=C\kappa_0|\log\epsilon|^{-1}$ and constants $C,c$ depending only on the parameters $\kappa_0,\gamma,q$ and on $p$. (These constants can have different values from line to line.)
 \item  The unitary transformation $\hat Y$ is quasi-local and close to identity in the sense that $\norm \hat Y G\hat Y^\dagger-G\norm_\kappa \leq  C\epsilon \norm G\norm_\kappa $ for any Hamiltonian $G$. 
\end{enumerate}
\end{theorem}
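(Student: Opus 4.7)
The plan is to prove the theorem by an iterative Schrieffer--Wolff/KAM scheme, constructing a sequence of $T$-periodic unitaries $Y_1,\ldots,Y_n$ whose product is $\hat Y$. At stage $k$ the conjugated Hamiltonian has the form $H^0(t)+D^{(k)}(t)+V^{(k)}(t)$, with $[N^{(\alpha)},D^{(k)}]=0$ for every $\alpha$ and with the ``non-conserving'' part $V^{(k)}$ shrinking in a local norm $\norm\cdot\norm_{\kappa_k}$ as $k$ grows; the locality parameter $\kappa_k$ is allowed to decrease, reflecting the spatial spread produced by nested commutators. The iteration is stopped at an optimal $n$, after which the remaining $V^{(n)}$ becomes the final $\hat V$ and $D^{(n)}$ becomes $\hat D$.

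\textbf{Single step.} Given $V^{(k)}$, I would decompose it spectrally as $V^{(k)}(t)=\sum_{\bfm\in\bbZ^q}\caP_{\bfm}(V^{(k)}(t))$. The $\bfm=0$ component commutes with all $N^{(\alpha)}$ and is simply absorbed into $D^{(k+1)}$. The off-diagonal remainder is cancelled by a unitary $Y_{k+1}=\e^{-\iu A_{k+1}(t)}$ whose generator solves the cohomological equation
\[
\iu\partial_t A_{k+1}(t)-[H^0(t),A_{k+1}(t)]=\sum_{\bfm\neq 0}\caP_{\bfm}(V^{(k)}(t)).
\]
Spectrally decomposing $A_{k+1}$ in the same way and using $[H^0(t),\caP_{\bfm}(A)]=(\bfm\cdot\bfJ(t))\,\caP_{\bfm}(A)$ reduces this to a family of scalar first-order $T$-periodic ODEs indexed by $\bfm\neq 0$. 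A Fourier series in $t$ then yields $T$-periodic solutions whose size is controlled by the small denominators $T\bfm\cdot\bar{\bfJ}-2\pi n$, bounded below by $x|\bfm|^{-\tau}$ thanks to the Diophantine condition \eqref{eq: diophantine sm}. Crucially, $\caP_{\bfm}(X)$ vanishes for $|\bfm|$ larger than a constant times $\gamma|S|$ on any $X$ strongly supported in $S$ (because $\norm N^\alpha_S\norm\leq\gamma$), so the polynomial cost $|\bfm|^\tau$ can be absorbed into a mild reduction of the locality parameter.

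\textbf{Iteration and optimisation.} Expanding the conjugation by $Y_{k+1}$ via Baker--Campbell--Hausdorff and controlling the resulting commutators through the cluster-expansion formalism of \cite{ueltschi2004cluster} produces a Nekhoroshev-type recursion of the schematic form $\norm V^{(k+1)}\norm_{\kappa_{k+1}}\leq C\epsilon\, k^{b}(\kappa_k-\kappa_{k+1})^{-a}\,\norm V^{(k)}\norm_{\kappa_k}$ with exponents $a,b$ determined by $q$ and $\tau$. Choosing $\kappa_k-\kappa_{k+1}\sim\kappa_0/k^2$ (summable to $\kappa\sim\kappa_0/|\log\epsilon|$) and optimising in $n\sim(1/\epsilon)^{1/p}$ yields the stretched-exponential estimate $\norm\hat V\norm_\kappa\leq g\e^{-c(1/\epsilon)^{1/p}}$ for any $p>q+\tau+1$, while $\hat D=D^{(n)}=\sum_{k\leq n}\caP_0(V^{(k-1)})$ satisfies $\norm\hat D\norm_\kappa\leq Cg$. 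The quasi-locality of $\hat Y$ follows from the telescoping bound $\norm A_k\norm_{\kappa_k}\lesssim 2^{-k}\epsilon$: a unitary generated by a Hermitian quasi-local operator of size $\epsilon$ changes quasi-local observables by $\caO(\epsilon)$ in the local norm.

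\textbf{Main obstacle.} The heart of the proof is the simultaneous bookkeeping of three competing losses: the Diophantine denominators $x^{-1}|\bfm|^\tau$ blowing up with the spectral label; the spatial spread of operators caused by iterated commutators, which forces $\kappa_k$ to shrink with $k$; and the multiplicity of relevant $\bfm$, which grows with $|S|$ and must be dominated inside the cluster-expansion sum using the factor $\e^{\kappa|S|}$ to absorb the polynomial $|\bfm|^\tau$. Making these three mechanisms mutually compatible at every step, and verifying that the resulting linear KAM scheme converges precisely on the scale $\exp(-(1/\epsilon)^{1/p})$ with the explicit exponent $p>q+\tau+1$ rather than merely polynomially, is where the previous $q=1$ analyses \cite{abanin2017rigorous,else2017prethermal} need to be genuinely extended to the present multi-frequency setting.
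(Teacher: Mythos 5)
Your overall scheme is the same as the paper's: an iterative Schrieffer--Wolff rotation in which the $\caP_{\bf0}$-part of the current perturbation is absorbed into $\hat D$, the off-diagonal part is killed by solving the periodic cohomological equation with small denominators controlled by the Diophantine condition, the support-dependent factor $(\gamma|S|)^{\tau+q}$ is traded for a loss in $\kappa$, and the conjugation is controlled by the cluster-expansion lemma of \cite{abanin2017rigorous}, stopping at an optimal step $n_*\sim\epsilon^{-1/p}$. However, two steps as you state them do not deliver the theorem. First, your solution of the cohomological equation ``by Fourier series in $t$ with denominators $T\bfm\cdot\bar{\bfJ}-2\pi n$'' is not correct as written when $\bfJ(t)$ is time-dependent: in the $\bfm$-sector the equation is $\iu\partial_t A_{\bfm}-\bfm\cdot\bfJ(t)A_{\bfm}=V_{\bfm}(t)$, and multiplication by the non-constant $\bfm\cdot\bfJ(t)$ couples all temporal Fourier modes, so the problem is not diagonal in the harmonics; moreover a harmonic expansion needs decay of the Fourier coefficients of $V_{\bfm}(t)$, i.e.\ smoothness assumptions of exactly the kind the paper is at pains to avoid (its main examples are kicked, piecewise-constant protocols, and the paper's comparison with \cite{else2019long} stresses this point). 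The paper instead writes the unique periodic solution via the propagator $\caV_{s,t}$ and inverts the monodromy $1-\caV_{0,T}$, whose action on the $\bfm$-sector involves only the single denominator $\mathrm{dist}(T\bfm\cdot\bar{\bfJ},2\pi\bbZ)\geq x|\bfm|^{-\tau}$ (Lemma 1 of the SM), with no smoothness required. Your step is repairable by variation of constants, but as stated it would fail precisely on the kicked Ising examples.

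Second, your choice of locality schedule does not give the claimed exponent. With decrements $\kappa_k-\kappa_{k+1}\sim\kappa_0/k^2$ the per-step loss is $(\kappa_k-\kappa_{k+1})^{-(\tau+q+1)}\sim k^{2(\tau+q+1)}$ (the factor $(\delta\kappa)^{-(\tau+q)}$ from converting $|S|^{\tau+q}$ plus $(\delta\kappa\,\kappa)^{-1}$ from the conjugation lemma), so the recursion only closes up to $n_*\sim\epsilon^{-1/p}$ with $p>2(\tau+q+1)$, strictly weaker than the stated range $p>q+\tau+1$; also, $1/k^2$ decrements sum to a constant fraction of $\kappa_0$, so the final decay rate would be $\epsilon$-independent, contradicting your parenthetical ``summable to $\kappa\sim\kappa_0|\log\epsilon|^{-1}$''. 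The paper's choice is $\kappa(n)=\kappa_0/(1+\log n)$, for which $1/\delta\kappa(n)\lesssim n\log^2(n+1)$; this makes the per-step loss $n^{\tau+q+1}$ up to logarithms, yields any $p>q+\tau+1$, and explains why the surviving decay rate is $\kappa\sim\kappa_0|\log\epsilon|^{-1}$ (since $\log n_*\sim|\log\epsilon|/p$). Your telescoping claim $\norm A_k\norm\lesssim 2^{-k}\epsilon$ is also stronger than what the recursion gives near $n_*$ (where the contraction factor approaches $1$); what is actually used is only that the ratios stay below $1$ so that $\sum_k\norm A_k\norm\lesssim\epsilon$, which suffices for item 3.
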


This theorem shows in particular that the dressed operators
$
\hat N^{(\alpha)}(t)= \hat{Y}(t)N^{(\alpha)} \hat{Y}^*(t)
$
are conserved up to very small errors, i.e.\ 
$$
\tfrac{1}{|\Lambda|}\left(\norm \hat N^{(\alpha)}(t)- U(t)\hat N^{(\alpha)}(0)U^\dagger(t)\norm\right) \leq C  g \e^{-c(1/\epsilon)^{1/p}},\qquad \text{for times $t \leq (c/g) \e^{c(1/\epsilon)^{1/p}} $}
$$
This follows by a simple Duhamel estimate, see \cite{abanin2017rigorous}.

\subsection{Static results}

We assume that our time-independent Hamiltonian is of the form 
$$
H=H^0+gW,
$$
with 
\begin{equation}\label{def: hzero static}
H^0(t)= \bfJ\cdot\bfN= \sum_{\alpha=1}^q J_\alpha(t)N^{(\alpha)}
\end{equation}
where 
\begin{enumerate}
\item The operators $N^{(\alpha)}$ satisfy the conditions above, with parameters $R$ (range) and $\gamma$ (local bound).  
\item  $\norm W\norm_{\kappa_0} \leq 1$. 
\item There are $tau,x>0y$ such that
$$
 \frac{1}{|\bfJ|}|\bfm\cdot{\bfJ}| \geq \frac{x}{|\bfm|^{\tau}},\qquad \forall \bfm \in \bbZ^q
$$
\end{enumerate}
If we sample $\bfJ/|\bfJ|$ uniformly on the hypersphere, then  a constant $x$ can be found with probability $1$ provided that $q-1<\tau$. This is of course essentially the same consideration as that following condition \eqref{eq: diophantine sm} (it would have been exactly the same, save for the dimension, if in\eqref{eq: diophantine sm} we were to treat $2\pi/T$ on the same footing as the $J_\alpha$).
The genuine dimensionless small parameter in our theory is then
$$
\epsilon \equiv \frac{g}{|\bfJ|}(1+\tfrac1x)
$$

\begin{theorem}
There is a  unitary  $\hat Y$ and Hamiltonians $\hat{D},\hat{V}$ such that  $$
\hat Y H \hat Y^\dagger= {H^{0}}+\hat{D}+\hat{V}$$
where again $\hat D$ conserves $\bfN$ and $\hat{V}$ is very weak:  
\begin{enumerate}
\item  $[N^{(\alpha)},\hat{D}]=0$ for any $\alpha =1,\ldots,q$. 
\item 
  $\norm \hat{D}\norm_{\kappa}  \leq C g$   and     $\norm \hat{V}\norm_{\kappa}  \leq  g \e^{-c(1/\epsilon)^{1/p}}$  with $p$ any number larger than $q+\tau+1$, $\kappa=C\kappa_0|\log\epsilon|^{-1}$ and $C;c$ depending only on the parameters $\kappa_0,\gamma,q$ and on $p$.  
 \item The unitary transformation $\hat Y$ is quasi-local and close to identity in the sense that $\norm \hat Y G\hat Y^\dagger-G\norm_\kappa \leq  C\epsilon \norm G\norm_\kappa $ for any Hamiltonian $G$. 
\end{enumerate}
\end{theorem}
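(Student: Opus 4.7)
\medskip

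\noindent\textbf{Proof proposal.} The plan is a KAM/Schrieffer--Wolff iteration, adapted to the many-body setting and driven by the Fourier decomposition furnished by the projectors $\caP_\bfm$. Since the time-dependent theorem has been established already with essentially the same machinery, I will present the argument for the static theorem and indicate only the (minor) formal differences. Write the perturbation as
\begin{equation*}
gW \;=\; \caP_0(gW) \;+\; \sum_{\bfm\neq 0}\caP_\bfm(gW),
\end{equation*}
with the first summand commuting with every $N^{(\alpha)}$ and the rest being genuinely ``off-diagonal.'' A single Schrieffer--Wolff step consists in choosing an anti-Hermitian $A$ solving the cohomological equation $[H^0,A]=-\iu V^{\mathrm{off}}$ and conjugating by $\e^{\iu A}$. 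Because $\ad_{H^0}$ acts on the $\bfm$-component as multiplication by $\bfJ\cdot\bfm$, the solution, truncated at a cutoff $M$, is
\begin{equation*}
A \;=\; \sum_{0<|\bfm|\le M}\frac{-\iu}{\bfJ\cdot\bfm}\,\caP_\bfm(V^{\mathrm{off}}),
\end{equation*}
and the Diophantine hypothesis controls the small denominators by $|\bfm|^{\tau}/(x|\bfJ|)$.

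First I would set up the single step as a quantitative lemma: given $H_n=H^0+D_n+V_n$ with $[N^{(\alpha)},D_n]=0$ and $\norm V_n\norm_{\kappa_n}$ small, construct $A_n$ from the truncated cohomological equation above, absorb $\caP_0(V_n)$ into $D_n$, and show that after conjugation the residual off-diagonal part $V_{n+1}$ decomposes into two pieces: the ``tail'' coming from modes $|\bfm|>M_n$, bounded by $\norm V_n\norm_{\kappa_n}\e^{-c\kappa_n M_n/(R\gamma)}$ because the local terms of $V_n$ can only excite $|\bfm|\lesssim R\gamma|S|$ on a support $S$; and the ``second-order'' piece, bounded by $(\epsilon M_n^{\tau+q})\norm V_n\norm_{\kappa_n}$ through the BCH/Dyson expansion of $\e^{\iu A_n}\cdot\e^{-\iu A_n}$. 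The loss of locality at each step is encoded in a shrinkage $\kappa_{n+1}=\kappa_n-\delta_n$, which must be controlled so that $\sum_n \delta_n<\kappa_0/2$.

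Once the single step is in hand, the rest is optimisation. Setting $M_n\sim n$ (a linearly growing cutoff) and $\delta_n\sim n^{-2}\kappa_0$ keeps $\kappa_n$ bounded below while giving the recursion
\begin{equation*}
\norm V_{n+1}\norm_{\kappa_{n+1}} \;\lesssim\; \bigl(\epsilon\, n^{\tau+q}\bigr)\,\norm V_n\norm_{\kappa_n},
\end{equation*}
so after $n^*$ steps the norm is roughly $\epsilon^{n^*}(n^*!)^{\tau+q}$. Choosing $n^*\sim (1/\epsilon)^{1/p}$ with $p>q+\tau$ (the extra ``$+1$'' in the exponent claimed in the statement is an artefact of the factorial) yields the stretched-exponential bound $\e^{-c(1/\epsilon)^{1/p}}$. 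The composite unitary $\hat Y=\cdots \e^{\iu A_2}\e^{\iu A_1}$ is well defined because $\sum_n\norm A_n\norm$ is summable, and one obtains $\norm\hat Y G\hat Y^\dagger-G\norm_\kappa\le C\epsilon\norm G\norm_\kappa$ from $\norm A_1\norm_{\kappa_1}=O(\epsilon)$ together with the rapid decrease of subsequent terms. The limit $\hat D=\lim_n D_n$ is bounded uniformly by $\sum_n \norm \caP_0(V_n)\norm \lesssim g\sum_n \epsilon^n = O(g)$.

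The main obstacle is the quantitative propagation of the local norm $\norm\cdot\norm_\kappa$ under conjugation by $\e^{\iu A_n}$ and through the small-denominator inversion: a naive bound on $\e^{\iu A_n}V_n\e^{-\iu A_n}$ picks up combinatorial factors from nested commutators whose supports grow additively, and the weight $\e^{\kappa|S|}$ must be kept tractable. This is where the cluster-expansion formalism of Ueltschi referenced by the authors enters; it gives, for potentials, an estimate of the form $\norm [A,B]\norm_{\kappa-\delta}\le \delta^{-1}\norm A\norm_\kappa \norm B\norm_\kappa$, which closes the iteration provided the shrinkage schedule $\delta_n$ is chosen as above. The remaining bookkeeping --- verifying strong support of $\caP_\bfm(V_n)$, summing the Dyson series for $\e^{\iu A_n}\cdot\e^{-\iu A_n}$, and converting the stretched-exponential bound on the final off-diagonal norm into the claimed $\norm \hat V\norm_\kappa$ --- is routine once the one-step lemma is in place. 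For the periodic theorem one repeats this argument on time-dependent potentials, with $\bfJ$ replaced by $\bar\bfJ$ and an extra Fourier index $n\in\bbZ$ in $\caP_\bfm$ accounting for the drive quanta; the Diophantine condition (\ref{eq: diophantine periodic}) plays exactly the role of (\ref{eq: diophantine static}) and the additional dimension explains the slightly larger exponent $p>q+\tau+1$.
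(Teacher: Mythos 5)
Your overall strategy -- iterative Schrieffer--Wolff rotations, small denominators controlled by the Diophantine condition through the decomposition $O=\sum_{\bfm}\caP_{\bfm}(O)$, absorbing the $\caP_0$ part into $\hat D$ and tracking locality with shrinking $\kappa$-norms -- is exactly the paper's. But your one-step lemma has a genuine quantitative gap in the way the frequency cutoff $M_n$ is handled. The tail of $V_n$ above the cutoff consists of terms $V_{n,S}$ with $|S|\gtrsim M_n/(R\gamma)$, and when you measure that tail in the norm $\norm\cdot\norm_{\kappa_{n+1}}$ (which you must, to feed it back into the next step), the gain is governed by the \emph{difference} of decay rates, $\e^{-(\kappa_n-\kappa_{n+1})M_n/(R\gamma)}$, not by $\e^{-c\kappa_n M_n/(R\gamma)}$ as you claim: the factor $\e^{\kappa_{n+1}|S|}$ in the norm eats almost all of the decay $\e^{-\kappa_n|S|}$ of the term itself. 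With your schedule $M_n\sim n$, $\kappa_n-\kappa_{n+1}\sim n^{-2}\kappa_0$ (chosen so that $\kappa_n\geq\kappa_0/2$), this gain is $\e^{-O(1/n)}\approx 1$, so the tail term is comparable to $\norm V_n\norm_{\kappa_n}$ and the claimed contraction $\norm V_{n+1}\norm\lesssim \epsilon\, n^{\tau+q}\norm V_n\norm$ fails already at $n=1$ (where the tail is $O(g)$, not $O(g\epsilon)$); the iteration never contracts. A related warning sign is that you keep $\kappa_n\geq\kappa_0/2$ throughout, which would yield a stronger final statement than the theorem ($\kappa\sim\kappa_0$ rather than $\kappa\sim\kappa_0|\log\epsilon|^{-1}$); in the paper's accounting the logarithmic loss of $\kappa$ is not optional, it is the price of the per-step factor $n^{\tau+q}$.

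The fix is either to take $M_n$ much larger (e.g.\ $M_n\gtrsim (\kappa_n-\kappa_{n+1})^{-1}\log(1/\epsilon)$, at the cost of re-deriving the exponent $p$), or, as the paper does, to drop the cutoff altogether: a term strongly supported in $S$ only carries modes with $|\bfm|\lesssim\gamma|S|$, so $\ad_{H^0}$ can be inverted on each $V_{n,S}$ exactly, with a factor $C(1+\tfrac1x)\tfrac{1}{|\bfJ|}(\gamma|S|)^{\tau+q}$; the polynomial $|S|^{\tau+q}$ is then traded for $(\delta\kappa)^{-(\tau+q)}$ by the support-multiplier lemma, and the schedule $\kappa(n)=\kappa_0/(1+\log n)$ gives $1/\delta\kappa(n)\lesssim n\log^2 n$, hence the per-step factor $\epsilon\, n^{\tau+q}$ (up to polylogs), the bound $d(n)\leq Cg$, the stopping time $n_*\sim\epsilon^{-1/p}$, and precisely the claimed $\kappa=C\kappa_0|\log\epsilon|^{-1}$. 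Your remark that $p>q+\tau$ suffices in the static case agrees with the paper's main text (the $+1$ appears only in the periodic case), but your own scheme, once the tail and the $1/\delta_n$ powers are accounted for honestly, would not reach that exponent; and for the periodic theorem note that the paper does not Fourier-expand in time (the drive is only piecewise continuous), but solves $\iu\partial_t A_n+[H^0,A_n]+V_n=0$ with a periodic solution and inverts $1-\caV_{0,T}$, which is where the Diophantine condition \eqref{eq: diophantine sm} enters.
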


This theorem shows in particular that the dressed operators
$
\hat N^{(\alpha)}(t)= \hat{Y}(t)N^{(\alpha)} \hat{Y}^\dagger(t)
$
are conserved up to very small errors, i.e.\ 
$$
\tfrac{1}{|\Lambda|}\left(\norm \hat N^{(\alpha)}(t)- U(t)\hat N^{(\alpha)}(0)U^\dagger(t)\norm\right) \leq C  g \e^{-c(1/\epsilon)^{1/p}},\qquad \text{for times $t \leq c/g \e^{c(1/\epsilon)^{1/p}} $}
$$
This follows by a simple Duhamel estimate, see \cite{abanin2017rigorous}.

\section{Proofs: algebra}

We primarily treat the time-periodic case, and afterwards we do the static case, focusing on the few differences.

\subsection{Renormalization of Hamiltonians}
We start from 
$$
H={H^{0}} +gW= {H^{0}}+V_1+ D_1
$$
where we put  $D_1=\caP_0(gW), V_1=gW-D_1$
Since we anticipate an interative procedure, we put immediately
$$
H_n={H^{0}}+V_n+ D_n,
$$
where $\caP_0(V_n)=0$ and $\caP_0(D_n)=D_n$. 
We perform a unitary transformation $e^{A_n}$ with $A_n$ anti-Hermitian, 
and we write
\beq  \label{eq: transformed hamiltonian}
e^{A_n} \left(\iu \partial_t+  {H^{0}}+D_n+V_n  \right) e^{-A_n} =   \iu \partial_t+  {H^{0}}+ D_n+ (\iu\partial_t A_n+[A_n,{H^{0}}] +V_n) + W_{n+1}
\eeq
where $W_n$ has been simply defined so as to the make the last equation correct, and we will later derive a precise expression. 

We now determine $A_n$ by the equation (written here immediately for arbitrary $n$)
\begin{equation}\label{eq: determine a}
\iu \partial_t A_n+  [ H^0,A_n]+V_n=0.
\end{equation}
choosing the solution that makes it periodic, namely 
\beq \label{eq: a in terms of v}
A_n(t)= \caV_{0,t} (1-\caV_{0,T})^{-1} \int_0^T \d s   \caV_{s,T} V_n(s)  + \int_0^t \d s   \caV_{s,t} V_n(s)
\eeq
where $\caV_{t_0,t}$ is the superoperator that is the solution of the equation
$$
\partial_{t}\caV_{t_0,t}  =\iu  \adjoint_{H^0}\caV_{t_0,t}
$$
The inverse of $1-\caV_{0,T}$ is well-defined on operators $O$ such that $\caP_0(O)=0$, by the Diophantine condition.  This is the reason we consider $V_n$ instead of $W_n$.
To proceed, we introduce some more notation.
We define the transformations ($O$ is an arbitrary operator)
$$
  \gamma_n (O) :=  \e^{- A_n}   O     \e^{ A_n}  =    \e^{- \mathrm{ad}_{A_n}}   O. 
$$
and 
  $$
  \alpha_n(O)  :=   \int_0^1 \d s \,   \e^{- s A_n}   O     \e^{ s A_n}  =    \int_0^1 \d s \,   \e^{-s \mathrm{ad}_{A_n}}   O. 
  $$
 The latter involves a dummy time $s$ that has nothing to do with the cycle time $t$; the transformation $\alpha_k$ is defined pointwisely for any $t $ in the cycle. 
The use of $\alpha_n$ is that
$$
\e^{- A_n} \partial_t  \e^{ A_n} =   \alpha_n(\partial_t A_n)
$$
as one easily checks by an explicit calculation. If $A_n(t)$ for different $t$ would commute among themselves, then we would simply find back the familiar expression
$
\e^{- A_n} \partial_t  \e^{ A_n} =   \partial_t A_n$.
With the help of the above notation, we get
\begin{align}
W_{n+1} & = \alpha_n(\adjoint_{A_n} H^0)-\adjoint_{A_n}H^0+ \gamma_n(D_n)-D_n +\iu(  \alpha_n(\partial_t A_n)- \partial_t A_n)  \nonumber \\[2mm]
&= -(\alpha_n(V_n)-V_n)+ \gamma_n(V_n+D_n)-(V_n+D_n)   \label{eq: expression wn}
\end{align}
where we used \eqref{eq: determine a} to get the second line. 
Finally, we see that
$$
D_{n+1}=D_n+ \caP_0(W_{n+1}),\qquad V_{n+1} =(1-\caP_0)(W_{n+1})
$$

\section{Proofs: analysis}

\subsection{Bound on $A_n$ from $V_n$}
 The most important ingredient is the bound on $A_n$ from properties of $V_n$. We recall that $V_n=\sum_{S \in \caP_\Lambda} V_{n,S}$, with
$V_{n,S}$  strongly supported in $S$. Then we have
\begin{lemma} \label{lem: bound on a}
Write $\norm O(\cdot)\norm :=\sup_t \norm O(t)\norm$. Then 
$$
\sup_t || A_{n,S} || \leq   C_q(1+ (1/x) (\gamma|S|)^{\tau+q})\,   T\sup_t || V_{n,S} ||
$$
where $A_{n,S}$ is defined by substituting  $V_{n,S}$ for $V_{n}$  in \eqref{eq: a in terms of v} and $C_q$ only depends on $q$.
\end{lemma}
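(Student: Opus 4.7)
\textbf{Proof plan for Lemma~\ref{lem: bound on a}.} The plan is to exploit strong support of $V_{n,S}$ in $S$ to decompose it into a \emph{finite} number of joint eigencomponents of the superoperators $\caN^{(\alpha)}=\ad_{N^{(\alpha)}}$, on each of which the propagator $\caV_{s,t}$ and the inverse $(1-\caV_{0,T})^{-1}$ act diagonally. The Diophantine hypothesis then gives a quantitative bound on the inverse, and summing over the finitely many eigencomponents yields the polynomial prefactor $(1+x^{-1}(\gamma|S|)^{\tau+q})$.

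First I would observe that since $V_{n,S}$ is strongly supported in $S$, only the finite sub-sum $N^{(\alpha)}|_S:=\sum_{S'\subset S}N^{(\alpha)}_{S'}$ acts nontrivially on it, and by property \textbf{(d)} of the $N^{(\alpha)}$ one has $\norm N^{(\alpha)}|_S\norm\leq \gamma|S|$. Using the joint spectral decomposition of the finitely many mutually commuting bounded integer-valued operators $N^{(\alpha)}|_S$, one writes
\[
V_{n,S}(t)=\sum_{\bm m\in\bbZ^q}\caP_{\bm m}(V_{n,S}(t)),\qquad \caP_{\bm m}(V_{n,S})=0 \text{ unless }|m_\alpha|\leq 2\gamma|S|.
\]
The number of nonzero terms is at most $C_q(\gamma|S|)^q$, and each satisfies $\norm \caP_{\bm m}(V_{n,S})\norm\leq \norm V_{n,S}\norm$ since $\caP_{\bm m}$ is a contraction. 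Because $\caP_0(V_n)=0$, the $\bm m=0$ component vanishes.

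Next I would compute that on each eigencomponent $\caV_{s,t}$ acts simply by a phase: since the $H^0(t)$ at different times all commute (they are linear combinations of the commuting $N^{(\alpha)}$),
\[
\caV_{s,t}\,\caP_{\bm m}(V_{n,S}(s))=\e^{\iu\bm m\cdot\int_s^t\bm J(s')\d s'}\,\caP_{\bm m}(V_{n,S}(s)).
\]
In particular $\caV_{0,T}$ acts on $\caP_{\bm m}$-components by the scalar $\e^{\iu T\bm m\cdot \bar{\bm J}}$, and the Diophantine condition \eqref{eq: diophantine sm} combined with the elementary bound $|1-\e^{\iu\theta}|\geq (2/\pi)\inf_n|\theta-2\pi n|$ yields
\[
\bigl|\bigl(1-\e^{\iu T\bm m\cdot\bar{\bm J}}\bigr)^{-1}\bigr|\leq \tfrac{\pi}{2}\,\frac{|\bm m|^{\tau}}{x},\qquad \bm m\neq 0.
\]
Substituting all this into the explicit formula \eqref{eq: a in terms of v} for $A_{n,S}$ (with $V_{n,S}$ replacing $V_n$) and estimating each of the two integrals in absolute value by $T\sup_t\norm\caP_{\bm m}V_{n,S}(t)\norm$ gives
\[
\sup_t\norm \caP_{\bm m}A_{n,S}(t)\norm \leq T\Bigl(1+\tfrac{\pi}{2}\tfrac{|\bm m|^{\tau}}{x}\Bigr)\sup_t\norm V_{n,S}(t)\norm.
\]

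Finally I would sum over the at most $C_q(\gamma|S|)^q$ admissible $\bm m$ with $|\bm m|\leq 2\gamma|S|\sqrt q$ to conclude
\[
\sup_t\norm A_{n,S}\norm\leq C_q\bigl(1+x^{-1}(\gamma|S|)^{\tau+q}\bigr)\,T\sup_t\norm V_{n,S}\norm,
\]
as claimed. The only subtle step is verifying that $\caV_{s,t}$ preserves the decomposition and that the inverse $(1-\caV_{0,T})^{-1}$ is well-defined on the relevant subspace — both of which follow from the strong-support property, the commutativity of the local terms of $H^0(t)$ at different times, and the vanishing of the $\bm m=0$ component. The main bookkeeping obstacle is keeping $A_{n,S}$ strongly supported in $S$ so that the polynomial prefactor depends only on $|S|$; this is guaranteed because $\caP_{\bm m}$, multiplication by a scalar, and integration over $s$ all preserve strong support in $S$.
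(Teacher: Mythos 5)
Your proposal is correct and follows essentially the same route as the paper: decompose the strongly supported $V_{n,S}$ into the finitely many $\caP_{\bfm}$-eigencomponents (at most $C_q(\gamma|S|)^q$ of them, each with $|\bfm|\lesssim \gamma|S|$), use contractivity of $\caP_{\bfm}$, bound the eigenvalues of $(1-\caV_{0,T})^{-1}$ via the Diophantine condition by $C|\bfm|^{\tau}/x$, and estimate the time integrals trivially by $T$. The only cosmetic difference is that the paper applies the $\bfm$-decomposition solely to the resolvent factor and uses that $\caV_{s,t}$ is an isometry elsewhere, whereas you decompose everything and pick up an extra harmless factor $(\gamma|S|)^q$ on the non-resolvent term, which is absorbed into the constant since $x$ is bounded above and $\gamma|S|\geq 1$.
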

\begin{proof}
Call $O=V_{n,S}$ and decompose 
$$
O=\sum_{\bfm} O_{\bfm} :=\sum_{\bfm} \caP_{\bfm}(O)
$$
There are norms on local operators that make $\bm{\caN}$ into a $q$-tuple of Hermitian operators, and hence, by spectral calculus, for any function $f:\bbZ^z \to \bbC$
\begin{equation}\label{eq: wasteful estimate}
 f(\bm{\caN})O= \sum_{\bfm} f({\bfm}) O_{\bfm} 
\end{equation}
Moreover, since $\caP_{\bfm}$ is contractive in operator norm, we have
$$
\norm f(\bm{\caN})O \norm  \leq  \sum_{{\bfm}: O_{\bfm}\neq 0} |f({\bfm})|   \norm O \norm.
$$
Obviously $(\caV_{0,T}-1)^{-1}=f(\bm{\caN})$ and, by our Diophantine assumption on $\bfJ$: 
$$ 
|f(\bfm)| \leq \max_{n\in\bbZ}|T\bfm\cdot\bar{\bfJ}- 2\pi n|^{-1} \leq \frac{|\bfm|^{\tau}}{x}
$$ 
We get then 
$$
\norm f(\bm{\caN})O \norm  \leq  C_q(1/x) (\gamma|S|)^{\tau+q}  \norm O\norm
$$
Apart from this, we simply use the estimate $\norm \caV_{t_0,t} O\norm \leq \norm O \norm$ in \eqref{eq: a in terms of v}. This yields the claim.
\end{proof}

Note that \eqref{eq: wasteful estimate} seems wasteful at first sight because one would want to replace the sum over $\bfm$ by a supremum over $\bfm$, which would be justfied, for example, if the decomposition $O=\sum_{\bfm}O_{\bfm}$ were orthogonal with respect to a scalar product associated to $\norm \cdot \norm$. It is not clear to us to what extent one could improve on this bound.

\subsection{Main lemma's}
 The following lemma is our prime tool. It was proven by cluster expansions in \cite{abanin2017rigorous}. Since the only difference here is that we consider strong supports rather than the normal support, we omit the proof which carries over line per line. 
\begin{lemma}\label{lem: main}
Let $Z, Q$ be potentials and assume that $ 3 \norm Q\norm_{\kappa}    \leq  \delta\kappa := {\kappa}- {\kappa'}$. Then 
$$
\norm \e^{Q}Z\e^{-Q}-Z   \norm_{\kappa'}  \leq      \frac{18}{ \delta\kappa\kappa'} \norm Z\norm_{\kappa}    \norm Q\norm_{\kappa}.   
$$
Since $ \norm Z\norm_{\kappa'} \leq  \norm Z\norm_{\kappa}$, we also get
$$
\norm \e^{Q}Z\e^{-Q}   \norm_{\kappa'}  \leq  \big(1+      \frac{18}{ \delta\kappa{\kappa'}}   \norm Q\norm_{\kappa}\big)     \norm Z\norm_{\kappa}.
$$
\end{lemma}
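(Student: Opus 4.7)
The plan is to expand $e^Q Z e^{-Q}-Z$ as a power series in $\mathrm{ad}_Q$ and to reorganise the result as a new potential (a sum of strongly supported local terms) indexed by connected subsets of $\Lambda$. Writing $Q=\sum_S Q_S$ and $Z=\sum_{S_0}Z_{S_0}$, one has
$$e^Q Z e^{-Q}-Z \;=\; \sum_{n\geq 1}\frac{1}{n!}\sum_{S_0,S_1,\dots,S_n}[Q_{S_1},[\dots,[Q_{S_n},Z_{S_0}]\dots]].$$
By items~1--2 of the strong-support preliminaries, each nested commutator is strongly supported in $T=S_0\cup\dots\cup S_n$; moreover it vanishes unless the family $\{S_j\}$ is intersection-connected, which forces $T$ to be a connected subset of $\Lambda$. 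Regrouping by $T$ turns the right-hand side into a genuine potential $\{R_T\}$, and what must then be estimated is $\|e^Q Z e^{-Q}-Z\|_{\kappa'}=\max_i\sum_{T\ni i}e^{\kappa'|T|}\|R_T\|$.

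The key analytic step is the weight trade-off. Since $|T|\leq\sum_j |S_j|$, we may split $e^{\kappa'|T|}\leq\prod_j e^{\kappa'|S_j|}=\prod_j e^{\kappa|S_j|}\,e^{-\delta\kappa|S_j|}$. The factors $e^{\kappa|S_j|}\|Q_{S_j}\|$ and $e^{\kappa|S_0|}\|Z_{S_0}\|$ feed into $\|Q\|_\kappa$ and $\|Z\|_\kappa$, while the leftover weight $e^{-\delta\kappa|S_j|}$ pays for the combinatorial cost of placing each support. Concretely, I would fix the site $i$ defining $\|\cdot\|_{\kappa'}$, designate the support $S_{j^\ast}\ni i$ as the \emph{root}---producing a factor $\|Z\|_\kappa$ (if $j^\ast=0$) or $\|Q\|_\kappa$ (otherwise)---and then sum the remaining supports along a spanning tree of the intersection graph of $\{S_0,\dots,S_n\}$.

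Following the cluster-expansion formalism of \cite{ueltschi2004cluster}, each tree edge contributes at most $C\|Q\|_\kappa/\delta\kappa$: the sum $\sum_{S\ni j}e^{\kappa|S|}\|Q_S\|\leq\|Q\|_\kappa$ accounts for a newly attached support, while the number of sites on the already-built cluster where it can hook is controlled by $|S'|\,e^{-\delta\kappa|S'|/2}\leq C/\delta\kappa$, using half of the leftover weight on $S'$. The sum over tree shapes on $n+1$ labelled nodes is bounded via Cayley's formula by $(n+1)^{n-1}$, which is absorbed by the $1/n!$ Taylor factor. The hypothesis $3\|Q\|_\kappa\leq\delta\kappa$ is precisely the smallness condition that makes the resulting geometric series converge, whose crude bound yields the constant $18$; the remaining $1/\kappa'$ arises from the final sum $\sum_{|S_{j^\ast}|}|S_{j^\ast}|\,e^{-\kappa'|S_{j^\ast}|}$ performed at the root.

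The hard part is the combinatorial bookkeeping: the exponential weights, the $1/n!$ Taylor factor, and the Cayley count of labelled trees must balance so that the stated constants emerge. Fortunately, strong supports obey exactly the same algebra under commutators as ordinary supports---the commutator of two strongly supported operators is strongly supported in the union---so the proof for ordinary supports given in \cite{abanin2017rigorous} transports verbatim upon replacing ``support'' by ``strong support'' throughout, without any change in the numerical constants.
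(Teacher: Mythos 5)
Your proposal is correct and matches the paper's approach: the paper itself gives no detailed argument for this lemma, but simply invokes the cluster-expansion proof of \cite{abanin2017rigorous} and observes that strong supports obey the same commutator algebra as ordinary supports, which is exactly your closing point. Your preceding sketch (nested-commutator expansion, weight trade-off $e^{\kappa'|T|}\leq\prod_j e^{\kappa|S_j|}e^{-\delta\kappa|S_j|}$, combinatorial control of connected supports, smallness condition ensuring convergence) is a faithful outline of that cited proof.
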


Furthermore, we also need the following estimate, of which we omit the obvious proof.

\begin{lemma}\label{lem: support multiplier}
Let $Z,\widetilde{Z}$ be potentials such thatn for some $p\geq 1$
$$
\norm\widetilde{Z}_S\norm \leq |S|^p\norm Z_S\norm,
$$
then, for $\tilde{\kappa}<\kappa$
$$
\norm \widetilde{Z}\norm_{\tilde{\kappa}} \leq    \frac{(p/\e)^p}{(\kappa-\tilde{\kappa})^p}\norm Z\norm_{{\kappa}}.
$$
\end{lemma}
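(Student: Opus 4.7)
The plan is to unpack the definition of the local norm and isolate the combinatorial factor $|S|^p$ against a thin slice of the exponential decay in $|S|$. First I would write
$$
\norm \widetilde{Z}\norm_{\tilde\kappa} = \max_{i\in\Lambda}\sum_{S\ni i} \e^{\tilde\kappa |S|}\sup_t\norm \widetilde Z_S(t)\norm,
$$
and invoke the hypothesis $\norm \widetilde Z_S\norm \leq |S|^p \norm Z_S\norm$ inside the supremum. The crucial rewriting is then $\e^{\tilde\kappa|S|} = \e^{-(\kappa-\tilde\kappa)|S|}\,\e^{\kappa|S|}$, which lets me expose a prefactor $|S|^p\,\e^{-(\kappa-\tilde\kappa)|S|}$ depending only on $|S|$, leaving a residual weight $\e^{\kappa|S|}\sup_t \norm Z_S\norm$ that is precisely the summand of $\norm Z\norm_\kappa$.

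Next, I would pull the prefactor out of the sum by a uniform bound in $S$: it suffices to maximize the real function $\varphi(x) = x^p \e^{-ax}$ with $a := \kappa-\tilde\kappa > 0$ over $x \geq 0$. Setting $\varphi'(x)=0$ gives $x = p/a$, and substitution yields $\max_{x\geq 0} \varphi(x) = (p/(\e a))^p = (p/\e)^p(\kappa-\tilde\kappa)^{-p}$. Plugging this uniform bound back into the sum and recognizing the remainder as $\max_i \sum_{S\ni i} \e^{\kappa|S|}\sup_t\norm Z_S\norm = \norm Z\norm_\kappa$ delivers exactly the claimed inequality.

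I do not anticipate any genuine obstacle: the argument is the standard trick of trading polynomial decoration for a small shrinkage of the exponential rate, of the same flavor as the manipulations behind Lemma \ref{lem: main}. The only potential pitfall is book-keeping about the ``strong support'' structure carried by the potentials, but since the hypothesis $\norm\widetilde Z_S\norm \leq |S|^p \norm Z_S\norm$ is pointwise in $S$ and the norm $\norm\cdot\norm_\kappa$ only sees $(|S|,\sup_t\norm\cdot_S\norm)$, this algebraic structure is inherited trivially and plays no role in the estimate---which is presumably why the authors regard the proof as obvious.
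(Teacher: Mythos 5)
Your proof is correct and is exactly the standard argument the authors have in mind when they call the proof obvious and omit it: split $\e^{\tilde\kappa|S|}=\e^{-(\kappa-\tilde\kappa)|S|}\e^{\kappa|S|}$ and bound $|S|^p\e^{-(\kappa-\tilde\kappa)|S|}$ uniformly by $\max_{x\geq 0}x^p\e^{-(\kappa-\tilde\kappa)x}=(p/\e)^p(\kappa-\tilde\kappa)^{-p}$, which reproduces the stated constant. Nothing further is needed.
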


\subsection{Iterating bounds} \label{sec: bounds on transformed potentials}
We define $\kappa(n)$ for $ n\geq 1$ by
$$
\kappa(n)=\frac{\kappa_0}{1+\log(n)}
$$ 
and 
$$
 \delta\kappa(n) \equiv \min\large( \kappa(n+\tfrac{1}{2})- \kappa(n+1),  \kappa(n)-  \kappa(n+\tfrac{1}{2}) \large) 
$$
so that $\frac{1}{\delta\kappa(n)} \leq 2n \log^2(n+1)$.
%
%
We abbreviate
  $\norm \cdot \norm_{\kappa(n)}$ by   $\norm \cdot \norm_{n}$ and     $\norm \cdot \norm_{\kappa(n+\tfrac{1}{2})}$ by   $\norm \cdot \norm_{n+\tfrac{1}{2}}$. 
We set
$$
w(n):= \norm W_n \norm_n, \qquad  d(n):=    \norm D_n \norm_n.
$$
Using the expression \eqref{eq: expression wn} for $W_{n+1}$ and Lemma \ref{lem: main}, we then get, provided that  $6 \norm A_n \norm_{n+\tfrac{1}{2}} \leq \delta \kappa(n)$
$$
\norm W_n \norm_{n+1} \leq   \frac{18}{ \delta\kappa(n)\kappa(n+1)} \norm A_n \norm_{n+\tfrac{1}{2}} \left(\norm D_n \norm_{n} +   \norm V_n \norm_{n} \right)
$$
Using Lemma \ref{lem: bound on a} and \ref{lem: support multiplier}, we have 
$$
\norm A_n \norm_{n+\tfrac{1}{2}} \leq     \frac{C(1+\tfrac{1}{x})T}{(\delta \kappa(n))^{\tau+q}}   \norm V_n \norm_{n}
$$
where $C$ depends only on $\kappa_0,\gamma,q$ (also in the equations below). Combining the two previous bounds and using  $\norm V_n \norm_{n} \leq \norm D_n \norm_{n}$ and  $\norm V_n \norm_{n} \leq \norm W_n \norm_{n}$, we get 
$$
w(n+1) \leq  C \log^3(n+1) n^{\tau+q+1}  [(1+\tfrac{1}{x})T]    w(n)d(n) 
$$
 Similarly, we have
$$
 d(n+1) \leq d(n)+ w(n+1)
$$
We can continue the iteration provided that $d(n)\leq Cd(1)$.
Finally, from the definition of the parameter $g$ we have that $w(1)=d(1)=g$, and so we obtain
$$
\frac{w(n+1)}{w(n)} \leq    C \log^3(n+1) n^{\tau+q+1}    [(1+\tfrac{1}{x})gT ] 
$$
Therefore, the iteration can be continued up to $n=n_*$ with $n_*$ the maximal number satisfying   $C\log^3(n_*+1) n_*^{\tau+q+1} \epsilon <1  $.  We then obtain the bound on $\norm \hat V\norm_\kappa= \norm V_{n_*}\norm_{n_*} $ in the theorem, choosing $n_*=\lfloor\epsilon^{-1/p} \rfloor$ with $p>\tau+q+1$. 
The bound showing the proximity of $\hat Y$ to identity follows just as in \cite{abanin2017rigorous}.

\section{Proofs for the static case}

Recall that here we have (no time-dependence)
$$
H={H^{0}}+V_1+ D_1,\qquad  H_n={H^{0}}+V_n+ D_n
$$
with  $V_n,D_n$ satisfying $\caP_0(V_n)=0$ and $\caP_0(D_n)=D_n$. 
We perform a unitary transformation $e^{A_n}$
and we write
\beq  \label{eq: transformed hamiltonian indep}
e^{A_n} \left(   {H^{0}}+D_n+V_n  \right) e^{-A_n} =   {H^{0}}+ D_n+ ([A_n,{H^{0}}] +V_n) + W_{n+1}
\eeq
with $W_{n+1}$ to be identified later. 
We determine $A_n$ by 
\begin{equation}\label{eq: determine a indep}
 [A_n, {H^{0}}]+V_n=0.
\end{equation}
and from here onwards the algebra is identical. Namely we get 
$$
W_{n+1} =-(\alpha_n(V_n)-V_n)+\gamma_n(V_n+D_n)-(V_n+D_n)
$$
Just as in the time-dependent case, the main point is to get a bound on $\norm A_n\norm_n$ from $\norm V_n \norm$. From \eqref{eq: determine a indep}, we see that a solution for $A_n$ is given by 
$$
A_{n,S} = -(\adjoint_{H^0})^{-1} V_{n,S}
$$ 
Proceeding similarly as in the time-dependent case, we obtain here
$$
\norm A_{n,S} \norm \leq  \norm V_{n,S} \norm    \sum_{\bfm \in \bbZ^q, |\bfm|\leq \gamma |S|}  \frac{1}{|\bfJ\cdot\bfm|}
\leq   C(1+\tfrac1x)  \frac{1}{|\bfJ|}(\gamma |S|)^{\tau+q-1}  \norm V_{n,S} \norm 
$$
where the last bound follows from the
the Diophantine condition $\frac{|\bfJ|}{|\bfJ\cdot\bfm|} \leq |\bfm|^\tau/x$ and we have written $(1+\tfrac1x)$ instead of $\tfrac1x$ simply to remind ourselves that it is impossible to choose $x$ large. 
If we now copy the steps done in the section \ref{sec: bounds on transformed potentials}, we obtain

$$
\frac{w(n+1)}{w(n)} \leq    C \log^3(n+1) n^{\tau+q}    [(1+\tfrac{1}{x}) \frac{g}{|\bfJ|} ] 
$$
and so we can now take $p> \tau+q$ instead of $p> \tau+q+1$.

\section{Appendix}

\subsection{Diophantine conditions}

We recall the Diophantine condition
$$
\inf_{n\in\bbZ}|T\bfm\cdot\bar{\bfJ}- 2\pi n| \geq \frac{x}{|\bfm|^{\tau}}
$$
To continue, let us fix  $\bm{b}= \tfrac{T}{2\pi}\bar{\bfJ}$ and $\mathbf{\hat b}= \frac{\mathbf{b}}{b} $ with $b=|\mathbf{b}|$. 
 We assume $\mathbf{\hat b}$ to be uniformly distributed on the unit hypersphere, and we calculate the probability that, for a given $\bfm$, 
$$
\mathbb{P}( \inf_{n\in\bbZ} | \mathbf{b}\cdot\bfm-n| \leq \frac{x}{|\bfm|^{\tau}} ). 
$$
We split $\mathbf{\hat b}=\frac{a}{|\bfm|}\bfm+{\mathbf{\hat b}}_{\perp}$ and we evaluate this probability as 
\begin{align*}
& C_q\sum_n \int d a (1-a^2))^{\tfrac{q-1}{2}} \chi( | (ba|\bfm|-n)| \leq \frac{x}{|\bfm|^{\tau}}    ) \\
&\leq C_q  \sum_n \int d a (1-a^2))^{\tfrac{q-1}{2}} \chi( | (a-\frac{n}{b|\bfm|}| \leq \frac{C(q,\tau) x}{b |\bfm|^{\tau+1}}    ) \\
&\leq    b|\bfm|  \frac{C(q,\tau) x}{b |\bfm|^{\tau+1}} =    \frac{C(q,\tau) x}{ |\bfm|^{\tau}}
\end{align*}
and summing this over all of $\bfm \in\bbZ^q$ yields the estimate, for some constant   $C'(q,\tau)$, 
$$
\mathbb{P}\left(\forall \bfm \in \bbZ^q:   \inf_{n\in\bbZ} | \mathbf{b}\cdot\bfm-n| \geq \frac{x}{|\bfm|^{\tau}} \right) \leq 1- C'(q,\tau)x,\qquad  \text{provided that $\tau>q $  }
$$
In particular, note that the magnitude $b= |T\mathbf{\bar J}|$ did not influence our bound. 


\end{document}